\title{The Treewidth of MDS and Reed-Muller Codes\footnotemark[1]}
\author{Navin Kashyap\footnotemark[2] \and Andrew Thangaraj\footnotemark[3]}
\date{}
\newtheorem{theorem}{Theorem}
\newtheorem{lemma}[theorem]{Lemma}
\newtheorem{prop}[theorem]{Proposition}
\def\F{\mathbb{F}}
\def\Z{\mathbb{Z}}
\def\cC{{\mathcal C}}
\def\cD{{\mathcal D}}
\def\cM{{\mathcal M}}
\def\cT{{\mathcal T}}
\def\b{\mathbf{b}}
\def\w{\mathbf{w}}
\def\1{\mathbf{1}}
\def\hu{\hat{u}}
\def\vu{\check{u}}
\def\k{\kappa}
\def\om{\omega}
\def\s{\sigma}
\begin{document}
\maketitle

\renewcommand{\thefootnote}{\fnsymbol{footnote}}

\footnotetext[1]{This work was supported in part by a Discovery Grant
from the Natural Sciences and Engineering Research Council (NSERC), Canada.}
\footnotetext[2]{N.\ Kashyap is with the Department of Electrical Communication
Engineering, Indian Institute of Science, Bangalore, and with the 
Department of Mathematics \& Statistics, Queen's University, Kingston, Ontario,
Canada. Email: \texttt{nkashyap@ece.iisc.ernet.in}}
\footnotetext[3]{A.\ Thangaraj is with the Department of Electrical
Engineering, Indian Institute of Technology, Madras. 
Email: \texttt{andrew@ee.iitm.ac.in}}

\renewcommand{\thefootnote}{\arabic{footnote}}

\begin{abstract}
The constraint complexity of a graphical realization of a linear code
is the maximum dimension of the local constraint codes in the realization. 
The treewidth of a linear code is the least constraint complexity 
of any of its cycle-free graphical realizations. This notion provides
a useful parametrization of the maximum-likelihood decoding complexity 
for linear codes. In this paper, we prove the surprising fact that
for maximum distance separable codes and Reed-Muller codes, treewidth
equals trelliswidth, which, for a code, is defined to be the least 
constraint complexity (or branch complexity) of any of its 
trellis realizations. From this, we obtain exact expressions for the 
treewidth of these codes, which constitute the only known explicit expressions
for the treewidth of algebraic codes. 
\end{abstract}


\section{Introduction\label{sec:intro}}

A (normal) graphical realization of a linear code $\cC$ consists of an
assignment of the coordinates of $\cC$ to the vertices of a graph,
along with a specification of linear state spaces and linear
``local constraint'' codes to be associated with the edges and vertices,
respectively, of the graph \cite{For01}. Cycle-free graphical realizations, 
or simply \emph{tree realizations}, are those in which the underlying graph
is a tree. Tree realizations of linear codes are interesting because 
the sum-product algorithm (SPA) on such a realization is an 
exact implementation of maximum-likelihood (ML) decoding \cite{wiberg}.
The notion of constraint complexity of a tree realization 
was introduced by Forney \cite{For03} as a measure of the 
computational complexity of the corresponding SPA algorithm.
It is defined to be the maximum dimension among the local constraint codes 
constituting the realization. The \emph{treewidth} of a linear code
is the least constraint complexity of any of its tree realizations.

The minimal tree complexity measure defined for linear codes 
by Halford and Chugg \cite{HC08} is a close relative of treewidth.
There are also closely related notions of treewidth defined for graphs 
\cite{Bod93} and matroids \cite{HW06}; these relationships are discussed 
in more detail in \cite{Kas09a}. Known facts about the treewidth of graphs 
and matroids imply that computing the treewidth of a code is NP-hard.

For a length-$n$ linear code over the field $\F_q$,
the computational complexity of implementing ML decoding,
via the SPA on an optimal tree realization, is $O(n q^t)$, 
where $t$ is the treewidth of the code \cite{Kas09a}. 
In particular, ML decoding is fixed-parameter tractable with respect 
to treewidth, which means that for codes whose treewidth is bounded 
by a fixed constant $t$, ML decoding can be performed in polynomial time.
Thus, treewidth provides a useful parametrization of ML decoding complexity.

Trellis representations (or trellis realizations) 
of codes are special cases of tree realizations
which have received extensive attention in the literature
(see e.g., \cite{Var98}). In the context of trellis representations,
constraint complexity is usually called branch complexity. We define
here the \emph{trelliswidth} of a code to be the least branch complexity
of any of its trellis representations (optimized over all possible
orderings of the coordinates of the code). As trellis representations
are instances of tree realizations, trelliswidth is at least as large
as treewidth. In fact, it is known that trelliswidth 
can be much larger than treewidth: it was shown in \cite{Kas09b} that
the ratio of trelliswidth to treewidth can grow at most logarithmically 
with blocklength, and that there are codes with arbitrarily large blocklengths 
that achieve this logarithmic growth rate. The only known code family 
achieving logarithmic growth rate of this ratio is a family consisting 
of cut-set codes of a certain class of graphs. The codes in this family 
all have treewidth equal to 2, and rate approximately $1/4$, 
but minimum distance only 4 \cite{Kas09a}. 

It is not known if there are any other code families for which there is 
a significant advantage to be gained in going from trellis representations
to tree realizations that are topologically more complex. 
In the only previous investigation reported on this question, 
Forney \cite{For03} considered the family of Reed-Muller codes.
He showed that for a certain natural tree realization
of Reed-Muller codes, obtained from their well-known recursive $|u|u+v|$
construction, the constraint complexity is, in general, strictly larger 
than the trelliswidth of the code. But this still leaves open the possibility
that there may be other tree realizations whose constraint complexity 
beats trelliswidth. In particular, it leaves undecided the question
of whether the treewidth of a Reed-Muller code can be strictly less 
than its trelliswidth.

In this paper, we show that for Reed-Muller codes, treewidth 
is equal to trelliswidth. The proof of this makes use of 
structural properties known for optimal trellis realizations of 
Reed-Muller codes, and also relies strongly on a certain separator theorem
for trees. A similar proof strategy also works on the much simpler case of
maximum distance separable (MDS) codes, where again we show that
treewidth equals trelliswidth. These results yield the first explicit 
expressions for the treewidth of classical algebraic codes.

The rest of this paper is organized as follows. After providing the
necessary definitions and notation in Section~\ref{sec:prelims}, 
we describe, in Section~\ref{sec:strategy}, our proof strategy 
for showing that treewidth equals trelliswidth for certain codes.
Sections~\ref{sec:mds} and \ref{sec:RM} deal with MDS and Reed-Muller codes, 
respectively. The technical details of some of the proofs
are given in appendices.

\section{Preliminaries and Notation\label{sec:prelims}}

The notation $[n]$ denotes the set of positive integers from 1 to $n$;
$[a,b]$ denotes the set $\{i \in \Z: a \leq i \leq b\}$.
An $(n,k)$ linear code is a code of length $n$ and dimension $k$. 
The $n$ coordinates of the code are indexed by the elements of an index set 
$I$; unless specified otherwise, $I = [n]$. Given a linear code $\cC$ 
with index set $I$, for $J=\{j_1,j_2,\ldots,j_s\}\subseteq I$, 
the shortening of $\cC$ to the coordinates in $J$ is denoted $\cC_J$
and defined as follows:
$$\cC_J
  =\{c_{j_1} c_{j_2} \ldots c_{j_s}:\ c_1 c_2 \ldots c_n \in \cC, \, c_i=0\text{ for }i\notin J\}.$$ 

The notions of treewidth and trelliswidth are central to this article,
and we define these next.

\subsection{Treewidth and trelliswidth}

For brevity, we provide only the necessary definitions and main results;
for details, see \cite{For03},\cite{Kas09a}.

A tree is a connected graph with no cycles. The set of nodes and
the set of edges of a tree $T$ are denoted by $V(T)$ and $E(T)$, respectively.
Degree-1 nodes in a tree are called \emph{leaves},
and all other nodes are called \emph{internal nodes}.
We let $L(T)$ denote the set of leaves of $T$. 
A tree is a \emph{path} if all its internal nodes have degree 2; 
and is a \emph{cubic tree} if all its internal nodes have degree 3. 
A path with at least one edge has exactly two leaves; 
a cubic tree with $n$ leaves has $n-2$ internal nodes.

Let $\cC$ be an $(n,k)$ linear code with index set $I$. 
A \emph{tree decomposition} of $\cC$ is a pair $(T,\om)$, where $T$ is a tree
and $\om: I \to V(T)$ is an assignment of coordinates of $\cC$ to the nodes of 
$T$. 

Given a tree decomposition $(T,\om)$ of $\cC$, for each node $v$ of $T$,
we define a quantity $\k_v$ as follows. Let $E(v)$ denote the set of edges 
of $T$ incident on $v$. For $e \in E(v)$, let $T_{e,v}$ denote the component 
of $T-e$ ($T$ with $e$ removed) \underline{not} containing $v$. 
Finally, let $I_{e,v} = \om^{-1}(V(T_{e,v}))$ be the set of coordinates of 
$\cC$ that are assigned to nodes in $T_{e,v}$. Then,
\begin{equation}
  \label{eq:1}
  \k_v=k-\sum_{e\in E(v)}\dim(\cC_{I_{e,v}}).
\end{equation}
The quantity $\k_v$ above is the dimension of the local constraint code
at node $v$ in the minimal realization of $\cC$ on $(T,\om)$, denoted by
$\cM(\cC;T,\om)$.

Let $\k(\cC;T,\om) = \displaystyle \max_{v \in V(T)} \k_v$ denote 
the constraint complexity of $\cM(\cC;T,\om)$.
The treewidth of a code $\cC$, denoted by $\k(\cC)$, is then defined as 
\begin{equation}
\k(\cC) 
= \min_{(T,\omega)} \k(\cC;T,\om).
\label{treewidth_def}
\end{equation}
It is, in fact, enough to perform the minimization in (\ref{treewidth_def})
over cubic trees $T$ with $n$ leaves, and mappings $\om$ that 
are bijections between $I$ and $L(T)$.

The trelliswidth of $\cC$, which we will denote by $\tau(\cC)$, 
can be defined using the above notation as follows:
\begin{equation}
\tau(\cC) = \min_{\pi} \k(\cC;P,\pi),
\label{trellis_def}
\end{equation}
where $P$ is the path on $n$ nodes, and the minimization is over
mappings $\pi$ that are bijections between $I$ and $V(P)$.
From (\ref{treewidth_def}) and (\ref{trellis_def}), it is clear that 
$\k(\cC) \le \tau(\cC)$.

Let $v_1,v_2,\ldots,v_n$ be the nodes of the path $P$, listed in order from
one leaf to the other. For the bijection $\pi:I \to V(P)$ that maps
$i$ to $v_i$ ($1 \le i \le n$), we obtain from (\ref{eq:1}),
\begin{equation}
\k_{v_i} = k - \dim(\cC_{\pi[1,i-1]}) - \dim(\cC_{\pi[i+1,n]}),
\label{eq:6} 
\end{equation}
where $\pi[a,b] = \{\pi(j): a \le j \le b\}$.

\subsection{Generalized Hamming weights}
The generalized Hamming weights of a linear code, introduced and studied in 
\cite{Wei91}, limit the possible dimensions of shortened versions of the code. 
So, they are related to the complexity of tree realizations in a natural way.

Let $\cC$ be an $(n,k)$ linear code with index set $I$. We will use the 
notation $\cD \sqsubseteq \cC$ to say that $\cD$ is a subcode of $\cC$. 
For a subcode $\cD\sqsubseteq \cC$, we define its support 
$\chi(D)=\{i:\exists \, c_1c_2\ldots c_n \in \cD \text{ s.t.}\, c_i\ne0\}$. 
The $p$-th generalized Hamming weight of $C$, denoted $d_p(\cC)$, 
is the size of the smallest support of a $p$-dimensional subcode of $\cC$, 
i.e., $d_p(\cC)=\min\{|\chi(\cD)|:\cD\sqsubseteq \cC,\dim(\cD)=p\}$ 
for $1\le p\le k$. It is known that $0\le d_1(\cC)<d_2(\cC)<\cdots<d_k(\cC)\le n$. 
Also, $d_1(\cC)$ is the minimum distance of $\cC$.

A closely related definition is that of maximal limited-support subcode 
dimensions. For $1 \le s\le n$, $U_s(\cC)$ is defined to be the maximum 
dimension of a subcode of $\cC$ with support at most $s$, i.e., 
$U_s(\cC)=\max\{\dim(\cD):\cD\sqsubseteq \cC,|\chi(\cD)|\le s\}$. 
The maximal limited-support subcode dimensions can be computed 
using the generalized Hamming weights as follows:
\begin{equation}
U_{s}(\cC)=u \text{ such that } d_{u}(\cC) \le s <d_{u+1}(\cC)
\label{eq:U_s}
\end{equation}
with the convention that $d_{0}(\cC)=0$ and $d_{k+1}(\cC)=n+1$. 
We also define $U_0(\cC) = 0$.

\section{The Proof Strategy}
\label{sec:strategy}
From the relevant definitions, treewidth cannot exceed trelliswidth 
for any code $\cC$, i.e., $\k(\cC) \le \tau(\cC)$. 
We now describe a general strategy that can be used to show
the opposite inequality in certain cases.

Consider an $(n,k)$ linear code $\cC$, with index set $I$.
The idea of using maximal limited-support subcode dimensions 
to study the complexity of trellis realizations of $\cC$ was introduced in 
\cite{KTFL93}. We extend that idea to tree realizations here.
For $J \subseteq I$, $\cC_J$ is a subcode of $\cC$ with support at most $|J|$.
So, $\dim(\cC_J)\le U_{|J|}(\cC)$. Therefore, given any tree decomposition 
$(T,\om)$ of $\cC$, we obtain from (\ref{eq:1}) that for any $v \in V(T)$,
\begin{equation}
  \label{eq:10}
  \k_v \ge k-\sum_{e\in E(v)}U_{|I_{e,v}|}(\cC).
\end{equation}

Now, recall from the definition of treewidth that it suffices to carry out 
the minimization in (\ref{treewidth_def}) over tree decompositions 
$(T,\omega)$ in which $T$ is a cubic tree with $n$ leaves, and $\om$ is 
a bijection between $I$ and $L(T)$. For such a $(T,\omega)$,
we note that $|I_{e,v}|$ is simply the number of leaves in $T_{e,v}$,
and for an internal node $v \in V(T)$, the summation in (\ref{eq:10})
contains exactly three terms. 

Let $n_{e,v}$ denote the number of leaves in $T_{e,v}$, 
and note that these numbers $n_{e,v}$ are determined purely by 
the topology of $T$. At an internal node $v$ in a cubic tree $T$ 
with $n$ leaves, we will list the edges in $E(v)$ in the form of an ordered 
triple $[e_1(v)\;e_2(v)\;e_3(v)]$ such that 
$1<n_{e_1(v),v}\le n_{e_2(v),v}\le n_{e_3(v),v}<n$. 
If the node $v$ is clear in the context, we will use the simplified notation 
$n_i=n_{e_i(v),v}$ for $i=1,2,3$. 


Suppose that $T$ is a cubic tree with $n$ leaves having an internal node $v$
such that the numbers $n_1,n_2,n_3$ satisfy 
$\sum_{i=1}^3 U_{n_i}(\cC) \le k - \tau(\cC)$. Then, by (\ref{eq:10}),
for any bijection $\om$ between $I$ and $L(T)$,
we have $\k_v \ge \tau(\cC)$, and hence $\k(\cC;T,\om) \ge \tau(\cC)$.
Consequently, if every cubic tree with $n$ leaves had such a node $v$, 
then we would have $\k(\cC) \ge \tau(\cC)$. Since the opposite inequality
is always true, we have proved the following proposition.

\begin{prop}
Let $\cC$ be an $(n,k)$ linear code with the property that for any cubic tree 
$T$ with $n$ leaves, there always exists an internal node $v \in V(T)$ 
such that $\sum_{i=1}^3 U_{n_i}(\cC) \le k - \tau(\cC)$,
where $n_i = n_{e_i(v),v}$. Then, $\kappa(\cC)=\tau(\cC)$.
\label{prop:k_eq_t}
\end{prop}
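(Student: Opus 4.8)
The plan is to establish the reverse inequality $\k(\cC) \ge \tau(\cC)$, since $\k(\cC) \le \tau(\cC)$ holds for every code directly from (\ref{treewidth_def}) and the definition of trelliswidth. Essentially the entire argument has already been assembled in the discussion preceding the statement, so my job is mainly to organize it and pin down the one point that requires care.

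First I would invoke the reduction noted just after (\ref{treewidth_def}): it suffices to carry out the minimization defining $\k(\cC)$ over pairs $(T,\om)$ in which $T$ is a cubic tree with $n$ leaves and $\om$ is a bijection between $I$ and $L(T)$. Accordingly, I fix an arbitrary such admissible pair $(T,\om)$ and aim to show that $\k(\cC;T,\om) \ge \tau(\cC)$; taking the minimum afterwards will give the desired bound on $\k(\cC)$.

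Next I would make explicit the structural observation that lets the purely topological hypothesis control the constraint complexity. Because $\om$ maps $I$ bijectively onto $L(T)$, for each internal node $v$ and each edge $e \in E(v)$ the coordinate set $I_{e,v} = \om^{-1}(V(T_{e,v}))$ is exactly the image of the leaves of $T_{e,v}$, so $|I_{e,v}| = n_{e,v}$. In particular, at an internal node $v$ the three incident edges yield $|I_{e_i(v),v}| = n_i$ for $i=1,2,3$, and these leaf-counts are determined by the topology of $T$ alone, independently of which bijection $\om$ was chosen. Applying the hypothesis to $T$, I obtain an internal node $v$ with $\sum_{i=1}^3 U_{n_i}(\cC) \le k - \tau(\cC)$. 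Substituting into the lower bound (\ref{eq:10}) gives $\k_v \ge k - \sum_{i=1}^3 U_{n_i}(\cC) \ge \tau(\cC)$, and hence $\k(\cC;T,\om) = \max_{w \in V(T)} \k_w \ge \k_v \ge \tau(\cC)$.

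Finally, since $(T,\om)$ was an arbitrary admissible pair and such pairs suffice for the treewidth minimization, taking the minimum over all of them yields $\k(\cC) \ge \tau(\cC)$, which together with the trivial direction proves $\k(\cC)=\tau(\cC)$. I expect no genuine obstacle in this argument, as inequality (\ref{eq:10}) does all the heavy lifting. The single point deserving attention is the reduction to cubic trees with bijective leaf assignments combined with the identification $|I_{e,v}| = n_{e,v}$: this is precisely what allows a hypothesis phrased entirely in terms of the leaf-counts $n_i$ to bound $\k_v$ uniformly over every bijection $\om$, and thereby over every tree realization that the minimization in (\ref{treewidth_def}) must consider.
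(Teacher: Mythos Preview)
Your proposal is correct and follows essentially the same argument as the paper, which gives the proof in the paragraph immediately preceding the proposition: use the reduction to cubic trees with bijective leaf assignments, note that $|I_{e,v}| = n_{e,v}$ so that the hypothesis and (\ref{eq:10}) yield $\k_v \ge \tau(\cC)$ at the distinguished node, and conclude $\k(\cC) \ge \tau(\cC)$. Your write-up is slightly more explicit about the identification $|I_{e,v}| = n_{e,v}$, but the route is the same.
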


A comment on the proof strategy implied by Proposition~\ref{prop:k_eq_t}
is in order. To show that $\k(\cC) \ge \tau(\cC)$ 
(and hence, $\k(\cC) = \tau(\cC)$), the obvious strategy
would be to show, for each tree decomposition $(T,\omega)$ of $\cC$,
the existence of a node $v \in V(T)$ for which $\k_v \ge \tau(\cC)$,
where $\k_v$ is given by (\ref{eq:10}). In general, the node $v$ would
depend on the tree $T$ as well as on the coordinate assignment $\omega$.
However, in the proof method based upon Proposition~\ref{prop:k_eq_t},
the idea is to find, for a given $(T,\omega)$, a node $v \in V(T)$ that 
depends only on the topology of $T$, and thus, is \emph{independent} 
of $\omega$, for which $\k_v \ge \tau(\cC)$ holds. It is 
a remarkable fact that this proof strategy can be made to work for
MDS and Reed-Muller codes, as we will see in Sections~\ref{sec:mds}
and \ref{sec:RM}.

The hypothesis of Proposition \ref{prop:k_eq_t} requires the 
existence of a node in any cubic tree, whose removal partitions the tree
into components with a certain property. The property in this case is
that the corresponding partition of the number of leaves, $n$, into
$n_1,n_2,n_3$ satisfies $\sum_{i=1}^3 U_{n_i}(\cC) \le k - \tau(\cC)$. 
Structural results of this form are known as separator theorems 
(see \emph{e.g.}, \cite{SW98})

A classical separator theorem is a theorem of Jordan \cite{Jor1869}
that states that any tree on $n$ nodes has an internal node whose 
removal leaves behind connected components with at most $n/2$ nodes each. 
A trivial modification of the simple proof of this theorem shows that 
the two occurrences of ``nodes'' in the theorem statement can be replaced 
by ``leaves''. For easy reference, we record this as a proposition for
the special case of cubic trees.

\begin{prop}
In any cubic tree with $n \ge 3$ leaves, there exists an internal node $v$ 
such that $n_{e_i(v),v} \le n/2$ for $i=1,2,3$.
\label{prop:n/2}
\end{prop}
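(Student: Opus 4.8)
The plan is to adapt Jordan's classical centroid argument, but using leaf-counts rather than node-counts as the weight function. As the excerpt already notes, only a trivial modification of Jordan's proof is needed; I will present that modification explicitly for cubic trees. The key idea is to find a node that is ``balanced'' with respect to the number of leaves hanging off each of its incident edges, and the natural device is a potential/extremal argument: either pick a node minimizing the maximum branch leaf-count, or walk toward the heavy branch until balance is forced.

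Concretely, here is the approach I would take. Recall that for an internal node $v$ with incident edges $e_1(v), e_2(v), e_3(v)$, the numbers $n_{e_i(v),v}$ count the leaves in the three components obtained by deleting $v$, and since every leaf lies in exactly one such component, these satisfy $n_1 + n_2 + n_3 = n$ (here I use the convention that a leaf attached directly is counted in its component). The claim to prove is that some internal node has all three counts at most $n/2$. First I would choose an internal node $v$ that \emph{minimizes} the quantity $M(v) := \max_{e \in E(v)} n_{e,v}$, and let $e^* = e_1(v)$ be an edge achieving this maximum, with $w$ the neighbor of $v$ across $e^*$. Suppose, for contradiction, that $M(v) > n/2$. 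Then the heavy component $T_{e^*,v}$ across $e^*$ contains more than $n/2$ leaves, while the other two components together contain fewer than $n/2$.

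The main step is then to compare $M(v)$ with $M(w)$ and derive a contradiction. When we cross from $v$ to its neighbor $w$, the component across $e^*$ (as seen from $v$) is precisely $T - e^*$ on the $w$-side; and from $w$'s vantage point, one of its three branches is the component \emph{containing} $v$, which now has $n - n_{e^*,v} < n/2$ leaves. Each of $w$'s other two branches is a sub-branch of the old heavy component $T_{e^*,v}$, so each of them has \emph{strictly fewer} leaves than $n_{e^*,v} = M(v)$ (since that component has now been split by removing $w$, and the $v$-side branch of $w$ carries a positive number of leaves). Hence every branch count at $w$ is strictly less than $M(v)$, giving $M(w) < M(v)$ and contradicting the minimality of $M(v)$. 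Therefore $M(v) \le n/2$, which is exactly the conclusion with $v$ as the desired internal node.

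I expect the only delicate point to be the edge cases guaranteeing that $w$ is genuinely a new internal node whose branches are strictly smaller. This needs $n \ge 3$ so that a cubic tree has at least one internal node, and it needs the observation that the $v$-side branch of $w$ always carries at least one leaf (so the remaining two branches of $w$ strictly lose leaves relative to the old heavy branch). Since $w$ lies in the old heavy component, which has more than one leaf whenever $M(v) > n/2 \ge 3/2$, the splitting at $w$ is nontrivial and the strict inequality $M(w) < M(v)$ holds. Writing this cleanly — rather than grinding through the arithmetic of how leaves redistribute — is where care is needed, but the argument is the standard centroid/separator argument and poses no real obstacle.
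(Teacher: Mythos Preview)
Your approach is exactly what the paper invokes: it gives no proof of its own but cites Jordan's centroid theorem and remarks that replacing ``nodes'' by ``leaves'' is a trivial modification, and you have written out precisely that modification for cubic trees. The argument is correct; the only slip is in your parenthetical justifying why $w$'s two non-$v$ branches each have strictly fewer than $M(v)$ leaves --- the relevant point is not that the $v$-side branch carries a leaf, but that in a cubic tree every component $T_{f,w}$ contains at least one leaf of $T$, so the two sub-branches (which partition the $M(v)$ leaves of the heavy side between them) each carry at most $M(v)-1$.
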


Another classical (edge) separator theorem is the following result
(cf.\ \cite{SW98}):
every cubic tree $T$ with $n$ leaves contains an edge $e$ such that 
both components of $T-e$ have at most $2n/3$ leaves. 
Now, one of these two components must have at least $n/2$ leaves;
let $v$ be the node incident with $e$ for which this component is $T_{e,v}$.
Then, for this $v$, we have $n_3 \in [n/2,2n/3]$. We record this fact below.

\begin{prop} 
In any cubic tree with $n \ge 3$ leaves, there exists an internal node $v$ 
such that $n_{e_3(v),v} \in [n/2,2n/3]$.
\label{prop:cubic}
\end{prop}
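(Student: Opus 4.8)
The plan is to read the proposition off the classical edge separator theorem recalled just before the statement: every cubic tree $T$ with $n$ leaves has an edge $e$ such that both components of $T-e$ have at most $2n/3$ leaves. I would take that theorem as given; the work that remains is purely to convert its edge-based conclusion into the node-based quantity $n_{e_3(v),v}$ and to decide which endpoint of $e$ to name $v$.

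First I would fix such a separator edge $e$ and let $a \le b$ be the leaf counts of the two components of $T-e$, so that $a+b=n$ and $b \le 2n/3$. Since $b \ge a$, we automatically get $b \ge n/2$, so the heavier component $H$ carries between $n/2$ and $2n/3$ leaves. I then take $v$ to be the endpoint of $e$ lying in the \emph{lighter} component, so that $H$ is exactly the component $T_{e,v}$ of $T-e$ not containing $v$; recalling that $n_{e,v}$ is by definition the number of leaves of $T_{e,v}$, this gives $n_{e,v} = b \in [n/2,2n/3]$.

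Two short checks then finish the argument. The first is that $v$ is internal: the component containing $v$ has $a = n-b \le n/2$ leaves, and if $v$ were a leaf this component would be just $\{v\}$, forcing $a=1$ and hence $b = n-1 > 2n/3$ for every $n \ge 4$, contradicting the separator bound. The second is that $e$ is the edge I have been denoting $e_3(v)$, i.e.\ that $n_{e,v}$ is the largest of the three leaf counts at $v$: since $v$ is internal it meets exactly three edges, and the two other than $e$ lead into the lighter component, so the corresponding component leaf counts sum to $a \le n/2$ and each is therefore at most $n/2 \le n_{e,v}$. Hence $n_3 = n_{e_3(v),v} = b \in [n/2,2n/3]$, as claimed.

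The step I would watch most carefully is exactly the boundary behaviour that makes the internal-node check work: one needs the heavier component to contain at least two leaves, and this is where the degenerate three-leaf star genuinely escapes the bound, since its unique internal node has $n_3 = 1 \notin [n/2,2n/3]$. Thus $n=3$ must be handled by inspection, while the argument above disposes of all $n \ge 4$. If one preferred not to invoke the separator theorem, its proof would become the main obstacle, and I would supply it by a descent argument: starting from any edge whose heavy side exceeds $2n/3$, move to the edge entering the heavier of the two subtrees hanging off the heavy endpoint; a short case check shows that either the new edge is already balanced (both sides $\le 2n/3$) or its heavy side has strictly decreased, and since that count is a positive integer the walk must terminate at a balanced edge.
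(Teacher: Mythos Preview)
Your argument is exactly the paper's: invoke the cited edge separator theorem and take $v$ to be the endpoint of the balanced edge lying on the lighter side, so that $T_{e,v}$ is the heavier component with leaf count in $[n/2,2n/3]$. You are in fact more careful than the paper's sketch --- you verify explicitly that $v$ is internal and that $e$ really is $e_3(v)$, and your observation that the $n=3$ star violates the bound as stated is correct (the paper only ever applies the proposition for $n = 2^m \ge 8$, so the slip is harmless for the main results).
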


As we will see in the next two sections, Propositions~\ref{prop:n/2} 
and \ref{prop:cubic} allow us to deal with MDS and Reed-Muller codes, 
respectively. We consider MDS codes first.

\section{Treewidth of MDS Codes}
\label{sec:mds}

MDS codes are $(n,k)$ linear codes for which the minimum distance equals 
$n-k+1$. Basic facts about MDS codes can be found in \cite{MS77}.

Let $\cC$ be an $(n,k)$ MDS code, with index set $I = [n]$.
The generalized Hamming weights of $\cC$ were computed in \cite{Wei91} 
as follows:
$$
  d_p(\cC)=n-k+p,\;\;1\le p\le k.
$$
From this, the maximal limited-support subcode dimensions, 
$U_{s}(\cC)$ for $1 \le s\le n$, can be determined using (\ref{eq:U_s}). 
They are given by
\begin{equation}
  \label{eq:mds_Us}
  U_s(\cC)=\begin{cases}
0,& 1 \le s \le n-k,\\
q,&s=n-k+q,\;q=1,2,\cdots,k.
\end{cases}
\end{equation}
Equivalently, $U_s(\cC)=\max\{0,s-(n-k)\}$. We use this to compute
$\tau(\cC)$ next.

Let $H$ be a parity-check matrix for $\cC$. 
For a subset $J \subseteq I$, the code $\cC_J$ has dimension 
equal to $|J| - \text{rank}(H|_J)$, where $H|_J$ refers to the 
restriction of $H$ to the columns indexed by $J$. As $\cC$ is MDS, 
$\text{rank}(H|_J) = \min\{|J|,n-k\}$. Hence, 
$\dim(\cC_J) = \max\{0,|J|-(n-k)\} = U_{|J|}(\cC)$. Therefore, 
for any permutation $\pi$ of $I$, we have for integers $1 \le a \le b \le n$,
$\dim(\cC_{\pi[a,b]}) = U_{b-a+1}(\cC)$. Therefore, 
the right-hand-side of (\ref{eq:6}) is always equal to 
$k - U_{i-1}(\cC) - U_{n-i}(\cC)$. It follows directly from this that
$$
\tau(\cC) = \max_{1 \le i \le n} (k - U_{i-1}(\cC) - U_{n-i}(\cC))
= k - \min_{1 \le i \le n} (U_{i-1}(\cC) + U_{n-1}(\cC)).
$$
A straightforward computation using (\ref{eq:mds_Us}) yields
$$
\min_{1\le i\le n}(U_{i-1}(\cC)+U_{n-i}(\cC))
=\begin{cases} 0,& \text{ if } n-k\ge k,\\
2k-n-1,& \text{ if } n-k<k.
\end{cases}
$$
achieved for $i=n-k+1$. We thus have the following result.

\begin{prop}
The trelliswidth of an $(n,k)$ MDS code $\cC$ is given by 
$\tau(\cC) = \min\{k,n-k+1\}$.
\label{prop:mds_trelliswidth}
\end{prop}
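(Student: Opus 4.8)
The plan is to compute $\tau(\cC)$ directly from the characterization of trelliswidth in terms of maximal limited-support subcode dimensions, which the excerpt has already set up. The key enabling fact, established just above the statement, is that for an MDS code every shortened code $\cC_J$ achieves the maximum possible dimension $U_{|J|}(\cC)$, because the parity-check matrix $H$ has every set of $n-k$ columns linearly independent. This means that for the path decomposition with the identity bijection, the quantity $\k_{v_i}$ in (\ref{eq:6}) depends only on the \emph{sizes} of the two intervals $[1,i-1]$ and $[i+1,n]$, not on which coordinates they contain. Consequently the branch complexity $\k(\cC;P,\pi)$ is the same for every permutation $\pi$, so the minimization defining $\tau(\cC)$ in (\ref{trellis_def}) is vacuous and I can read $\tau(\cC)$ off a single coordinate ordering.

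First I would record the closed form $U_s(\cC)=\max\{0,s-(n-k)\}$ from (\ref{eq:mds_Us}) and substitute it into
$$
\tau(\cC) = k - \min_{1 \le i \le n}\bigl(U_{i-1}(\cC) + U_{n-i}(\cC)\bigr),
$$
which reduces the problem to evaluating the minimum of $f(i) := U_{i-1}(\cC) + U_{n-i}(\cC)$ over $i \in [n]$. Writing each $U$ term explicitly, $f(i) = \max\{0,(i-1)-(n-k)\} + \max\{0,(n-i)-(n-k)\}$, which is a piecewise-linear function of $i$. The next step is a routine case analysis on where the two thresholds $i-1 = n-k$ and $n-i = n-k$ sit relative to the range of $i$, splitting into the regime $n-k \ge k$ (low rate, where the two ``active'' ranges of the max functions can be made disjoint so that $f(i)=0$ is attainable) and the regime $n-k < k$ (high rate, where the ranges overlap and the minimum of $f$ is forced to be positive, equal to $2k-n-1$).

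The only mildly delicate point is verifying that the claimed minimum value $2k-n-1$ in the high-rate case is actually achieved and is genuinely the minimum; the excerpt asserts it is attained at $i=n-k+1$, so I would confirm that choice: at $i = n-k+1$ one interval has size $n-k$ (contributing $U_{n-k}(\cC)=0$) and the other has size $k-1$ (contributing $U_{k-1}(\cC)=k-1-(n-k)=2k-n-1$), matching the claim, and since $f$ is convex piecewise-linear its minimum over the integer range is attained at such a boundary index. Finally I would assemble the two cases: $\tau(\cC)=k-0=k$ when $n-k\ge k$, i.e.\ $k \le n-k+1$, and $\tau(\cC)=k-(2k-n-1)=n-k+1$ when $n-k<k$, i.e.\ $n-k+1 \le k$. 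These two expressions combine into the single formula $\tau(\cC)=\min\{k,\,n-k+1\}$, completing the proof. I do not anticipate a genuine obstacle here; the substantive content is the MDS rank property yielding $\dim(\cC_J)=U_{|J|}(\cC)$, and everything after that is bookkeeping on a piecewise-linear minimization.
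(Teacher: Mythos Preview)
Your proposal is correct and follows essentially the same route as the paper: both use the MDS rank property to get $\dim(\cC_J)=U_{|J|}(\cC)$, reduce $\tau(\cC)$ to $k-\min_i\bigl(U_{i-1}(\cC)+U_{n-i}(\cC)\bigr)$, and then split into the cases $n-k\ge k$ and $n-k<k$ with the minimum attained at $i=n-k+1$. The only cosmetic difference is that you invoke piecewise-linear convexity to justify the minimizer, whereas the paper simply asserts the outcome of the computation.
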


With this, we have
\begin{equation}
k - \tau(\cC) = \max\{0,2k-n-1\}.
\label{eq:mds1}
\end{equation}
%
%
We can now prove that the treewidth of an MDS code equals its trelliswidth.

\begin{theorem}
For an $(n,k)$ MDS code $\cC$, we have
$$
\k(\cC) = \tau(\cC) = \min\{k,n-k+1\}.
$$
\label{thm:mds_treewidth}
\end{theorem}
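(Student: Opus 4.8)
The plan is to invoke Proposition~\ref{prop:k_eq_t}, which reduces the claim $\k(\cC) = \tau(\cC)$ to a purely combinatorial statement about cubic trees: I must show that for every cubic tree $T$ with $n$ leaves, there exists an internal node $v$ whose leaf-partition $(n_1,n_2,n_3)$ satisfies $\sum_{i=1}^3 U_{n_i}(\cC) \le k - \tau(\cC)$. Since Proposition~\ref{prop:mds_trelliswidth} already gives $\tau(\cC) = \min\{k,n-k+1\}$, and (\ref{eq:mds1}) records $k - \tau(\cC) = \max\{0,2k-n-1\}$, the target inequality to verify at the selected node is $\sum_{i=1}^3 U_{n_i}(\cC) \le \max\{0,2k-n-1\}$, where $U_s(\cC) = \max\{0, s-(n-k)\}$ by (\ref{eq:mds_Us}).

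First I would dispose of the easy regime $n - k \ge k$ (i.e.\ $2k \le n$). Here $k - \tau(\cC) = 0$, so I need a node with $\sum_{i=1}^3 U_{n_i}(\cC) = 0$, equivalently $n_i \le n-k$ for each $i$. The natural tool is Proposition~\ref{prop:n/2}, which furnishes an internal node $v$ with $n_i \le n/2$ for all $i$; since $n/2 \le n-k$ in this regime, every $U_{n_i}(\cC) = 0$ and the inequality holds trivially. Thus the whole problem concentrates on the high-rate case $n - k < k$, where $k - \tau(\cC) = 2k-n-1 > 0$.

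For the high-rate case I would again apply Proposition~\ref{prop:n/2} to pick an internal node $v$ with $n_1 \le n_2 \le n_3 \le n/2$. Writing $U_{n_i}(\cC) = \max\{0, n_i - (n-k)\}$, I would bound the sum $\sum_{i=1}^3 U_{n_i}(\cC)$ from above. The key numerical constraints are $n_1 + n_2 + n_3 = n$ (the three components partition all $n$ leaves of the cubic tree) and $n_3 \le n/2$. Dropping the $\max$ by using $U_{n_i}(\cC) \le n_i - (n-k)$ whenever $n_i > n-k$, I expect the sum to telescope: if all three terms are positive, $\sum (n_i - (n-k)) = n - 3(n-k) = 3k - 2n$, which is already $\le 2k-n-1$ whenever $k \le n-1$, comfortably true. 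The subtlety is handling the cases where only some $n_i$ exceed $n-k$, and here the constraint $n_3 \le n/2$ together with the ordering should control the largest term. I anticipate a short case analysis on how many of the $n_i$ exceed $n-k$, using $n_3 \le n/2$ to cap the dominant contribution.

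The main obstacle I foresee is pinning down the case where exactly one or two of the $n_i$ exceed the threshold $n-k$, since then the clean telescoping identity breaks and I must argue more carefully that $n_3 - (n-k) \le 2k-n-1$, i.e.\ $n_3 \le k-1$, which follows from $n_3 \le n/2$ precisely when $n/2 \le k-1$; for the boundary values of $n$ and $k$ (for instance $n$ odd, or $n = 2k-1$) this requires checking that the floor in $\lfloor n/2 \rfloor$ does not violate the bound. I would therefore treat the extreme/boundary configurations separately, verifying the inequality directly there, and rely on the generic telescoping bound elsewhere. Once the combinatorial inequality is confirmed for every cubic tree, Proposition~\ref{prop:k_eq_t} immediately yields $\k(\cC) = \tau(\cC) = \min\{k, n-k+1\}$.
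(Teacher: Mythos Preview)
Your approach is essentially the paper's: invoke Proposition~\ref{prop:n/2} to get a node with $n_i \le n/2$, then verify $\sum U_{n_i}(\cC) \le k-\tau(\cC)$ via a short case analysis, the paper parametrizing by $\delta = k-n_3$ while you split on how many $n_i$ exceed $n-k$. Your ``obstacle'' is not one: in the high-rate case $n<2k$ forces $n_3 \le n/2 < k$, hence $n_3 \le k-1$ by integrality, and the remaining subcases (two or three $n_i$ above $n-k$) telescope exactly as you anticipated; only the degenerate cases $k=n$ and $n\le 2$ need to be set aside first, as the paper does.
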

\begin{proof}
The statement is trivial for $n=1,2$, or when $k=n$, 
so we assume $n \ge 3$ and $1 \le n-k$. 
Let $T$ be a cubic tree with $n$ leaves, and let $v$ be the node guaranteed 
by Proposition~\ref{prop:n/2}. We will show that $v$ satisfies the hypothesis 
of Proposition~\ref{prop:k_eq_t}.

Set $n_i = n_{e_i(v),v}$, $i = 1,2,3$, and recall that, by definition, 
$n_1 \le n_2 \le n_3$. By choice of $v$, we also have $n_i \le n/2$ 
for $i=1,2,3$. For convenience, we write $U_{n_i}$ for $U_{n_i}(\cC)$.

\medskip

\noindent \underline{Case 1: $n-k \ge k$}.

In this case, $n_i \le n/2 \le n-k$, so that $\sum_i U_{n_i} = 0$ by 
(\ref{eq:mds_Us}). Moreover, by (\ref{eq:mds1}), $k-\tau(\cC) = 0$.

\medskip

\noindent \underline{Case 2: $1\le n-k < k$}.

Now, we have $n_i \le n/2 < k$. 
We must show that $\sum_{i} U_{n_i} \le 2k-n-1$. If $n_3 \leq n-k$,
then $\sum_i U_{n_i} = 0$. So, we assume $n_3 = k-\delta$,
with $1 \leq \delta < 2k-n$. Then, 
$U_{n_3}=n_3-(n-k)=2k-n-\delta$ and $n_1+n_2=n-n_3=n-k+\delta$. So, we have
  \begin{eqnarray*}
    U_{n_1}+U_{n_2}+U_{n_3}&=&\max\{0,k-n+n_1\}+\max\{0,k-n+n_2\}+2k-n-\delta\\
                 &\le& \max\{0,k-n+n_1,k-n+n_2,2k-2n+n_1+n_2\}+2k-n-\delta\\
                 &=&\max\{2k-n-\delta,3k-2n+n_2-\delta,3k-2n\}\\
                 &\le&2k-n-1,
  \end{eqnarray*}
where the last inequality holds because $\delta\ge1$, $n_2\le n-k+\delta-1$ 
and $n-k\ge1$. 

\medskip

Thus, in both cases, we see that $\sum_i U_{n_i} \le k - \tau(\cC)$,
and so, by Proposition~\ref{prop:k_eq_t}, we have $\k(\cC) = \tau(\cC)$.
\end{proof}

\section{Reed-Muller codes\label{sec:RM}}

For a positive integer $m$ and a non-negative integer $r$ with $0 \le r\le m$, 
the $r$-th order binary Reed-Muller code of length $2^m$, denoted RM$(r,m)$, 
is defined as follows. 
Let $P^m_r$ denote the set of all Boolean polynomials in $m$ variables of 
degree less than or equal to $r$. For an integer $i$, $0\le i\le 2^m-1$, 
with binary expansion $i=\sum_{j=0}^{m-1}b_j(i)2^j$, $b_j(i)\in\{0,1\}$, 
we let $\b(i)=(b_0(i),b_1(i),\cdots,b_{m-1}(i))$. For $f\in P^m_r$, 
let $f(\b(i))=f(b_0(i),b_1(i),\cdots,b_{m-1}(i))$. The code RM$(r,m)$ 
is defined as
\begin{equation}
  \label{eq:5}
  \text{RM}(r,m)=\{[f(\b(0))\;f(\b(1))\;\cdots\; f(\b(2^m-1))]: f\in P^m_r\}. 
\end{equation}
The code $\text{RM}(r,m)$ has length $n=2^m$, dimension $k(r,m) 
= \sum_{j=0}^r\binom{m}{j}$, and minimum distance $2^{m-r}$ \cite{MS77}. 
In (\ref{eq:5}), the order of evaluation of the function $f$ is 
according to the index set $I=[0,2^m-1]$. This is called the standard bit 
order. 

We will denote the treewidth and trelliswidth of $\text{RM}(r,m)$
by $\k(r,m)$ and $\tau(r,m)$, respectively.

\subsection{Trelliswidth of $\text{RM}(r,m)$}

Let $\cC$ be the Reed-Muller code $\text{RM}(r,m)$ in the standard bit order,
so that $I = [0,2^m-1]$. In this section, we derive an exact expression
for the trelliswidth of $\cC$.

Let $P$ be the path on $n=2^m$ nodes, with $v_0,v_1,\ldots,v_{n-1}$ being 
the nodes of $P$, listed in order from one leaf to the other.
For any $\pi: I \to V(P)$, we obtain from (\ref{eq:6}), in a manner analogous 
to the derivation of (\ref{eq:10}),
$$
\k_{v_i} \geq k(r,m) - U_i(\cC) - U_{n-1-i}(\cC),
$$
for $i = 0,1,\ldots,n-1$. Thus,
\begin{equation}
\k(\cC;P,\pi) \geq k(r,m) - \min_{0 \le i \le n-1} (U_i(\cC) + U_{n-1-i}(\cC)).
\label{eq:RM1}
\end{equation}
Note that the right-hand-side is independent of $\pi$, so that 
by (\ref{trellis_def}), 
\begin{equation}
\tau (\cC) \geq k(r,m) - \min_{0 \le i \le n-1} (U_i(\cC) + U_{n-1-i}(\cC)).
\label{eq:RM2}
\end{equation}
It is shown in \cite{KTFL93} that for $\text{RM}(r,m)$ in the standard bit 
order, we have for $i = 0,1,\ldots,n-1$,
\begin{equation}
  \dim(\cC_{[0,i]}))= U_{i+1}(\cC) \ \ \text{ and } \ \ 
  \dim(\cC_{[i,n-1]}) = U_{n-i}(\cC). 
\label{eq:std_bit_order}
\end{equation}
It follows that when $\pi$ simply maps $i$ to $v_i$ for all $i\in I$,
then we have equality in (\ref{eq:RM1}), and hence, in (\ref{eq:RM2}).
To put this another way, the branch complexity of the minimal trellis 
representation of $\text{RM}(r,m)$ in the standard bit order attains 
the lower bound on, and thus equals, the trelliswidth of the code. 
Techniques from \cite{BN00} allow us to compute, 
with very little effort, the branch complexity of this 
trellis representation. We give the details of this computation in
Appendix~A. From this, we obtain the following result.

\begin{prop} The trelliswidth of the Reed-Muller code $\text{RM}(r,m)$
is given by
$$
\tau(r,m) = 
\begin{cases}
\sum_{j=0}^r \binom{m-2j-1}{r-j} & \text{ if } m \ge 2r+1, \\
1 + \sum_{j=0}^{m-r-1} \binom{m-2j-1}{r-j} & \text{ if } m < 2r+1. \\
\end{cases}
$$
\label{prop:RM_tau}
\end{prop}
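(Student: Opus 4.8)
The plan is to turn the proposition into an explicit evaluation. By the equality in (\ref{eq:RM1})--(\ref{eq:RM2}) established in the main text, the branch complexity of the minimal trellis of $\text{RM}(r,m)$ in the standard bit order already equals $\tau(r,m)$; explicitly, $\tau(r,m) = k(r,m) - \min_{0 \le i \le n-1}(U_i(\cC) + U_{n-1-i}(\cC))$, with $n = 2^m$. It is convenient to work simultaneously with the state-complexity profile $s_i = k(r,m) - U_i(\cC) - U_{n-i}(\cC)$. Since the generalized Hamming weights are strictly increasing, $U$ is nondecreasing with unit jumps, so the branch complexity at each section exceeds the adjacent state complexities by at most $1$; writing $S(r,m) = \max_i s_i$ for the maximum state complexity, this gives $S(r,m) \le \tau(r,m) \le S(r,m) + 1$. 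The whole computation therefore reduces to finding $S(r,m)$ in closed form and deciding, in each regime, whether the upper or the lower bound is attained.

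First I would pin down the profile $U_s(\cC)$ using the recursive $|u|u+v|$ construction, which is the role played by the techniques of \cite{BN00}. Writing each codeword as $(u \mid u+v)$ with $u \in \text{RM}(r,m-1)$ and $v \in \text{RM}(r-1,m-1)$, an analysis of which codewords can vanish on a prefix (respectively, suffix) of the coordinates yields a two-part recursion: $U_s(\text{RM}(r,m)) = U_s(\text{RM}(r-1,m-1))$ for $1 \le s \le 2^{m-1}$, and $U_s(\text{RM}(r,m)) = k(r-1,m-1) + U_{s-2^{m-1}}(\text{RM}(r,m-1))$ for $2^{m-1} < s \le 2^m$. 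Substituting these into the formula for $\tau(r,m)$ converts the minimization over cuts into a ``cross'' minimization that pairs the prefix profile of $\text{RM}(r-1,m-1)$ against the suffix profile of $\text{RM}(r,m-1)$.

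With the profile in hand I would compute $S(r,m)$ by induction on $m$. A useful organizing principle is the classical fact that a linear code and its dual share the same state-complexity profile; since $\text{RM}(r,m)^\perp = \text{RM}(m-r-1,m)$, this transfers the high-rate regime $m < 2r+1$ to the low-rate regime $m \ge 2r+1$ of the dual code. In the low-rate regime I would solve the cross-recursion, applying Pascal's identity $\binom{m-2j-1}{r-j} = \binom{m-2j-2}{r-j} + \binom{m-2j-2}{r-j-1}$ to collapse the induction to the closed form $\sum_{j=0}^{\min(r,\,m-r-1)} \binom{m-2j-1}{r-j}$, and I would verify that here branch complexity equals state complexity, giving the first case $\sum_{j=0}^{r}\binom{m-2j-1}{r-j}$. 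For $m < 2r+1$ the state complexity is the same as that of the low-rate dual, but I would show the branch complexity is strictly larger by exactly $1$, producing the $1 + \sum_{j=0}^{m-r-1}\binom{m-2j-1}{r-j}$ of the second case.

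The main obstacle is twofold. First, the maximizing cut is \emph{not} the midpoint $i = 2^{m-1}$ (whose state complexity is only $\binom{m-1}{r}$, since $s_{2^{m-1}} = k(r,m) - 2k(r-1,m-1)$); rather it is the cross-over section at which the $\text{RM}(r-1,m-1)$ and $\text{RM}(r,m-1)$ sub-profiles balance, and locating it and evaluating the resulting binomial sum is the technical heart of the argument. Second, I expect the branch-versus-state dichotomy --- establishing that the $+1$ is realized precisely when $m < 2r+1$ --- to be the most delicate point, because the induction for $S(r,m)$ unavoidably straddles the boundary $m = 2r+1$ at which the summation limit $\min(r,m-r-1)$ switches, so the Pascal-rule bookkeeping and the branch correction must be tracked consistently across both regimes.
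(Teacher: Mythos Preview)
Your plan is sound and would work, but it takes a different and more laborious route than the paper. Both approaches start from the same two observations: (i) the standard-bit-order trellis already attains $\tau(r,m)$, and (ii) $S(r,m)\le\tau(r,m)\le S(r,m)+1$. From there the paper does \emph{not} rederive $S(r,m)$ or locate the maximizing cut by recursion; it simply quotes the Berger--Be'ery closed form $S(r,m)=\sum_{j=0}^{\min\{r,m-r-1\}}\binom{m-2j-1}{r-j}$ and then uses the Blackmore--Norton ``point of gain / point of fall'' description (an index $i$ is a point of gain iff $|\b(i)|_1\le r$, a point of fall iff $|\b(i)|_0\le r$). The $+1$ dichotomy then falls out in one line: $\tau_i=S(r,m)+1$ forces $i+1$ to be simultaneously a point of gain and of fall, hence $m=|\b(i+1)|_0+|\b(i+1)|_1\le 2r$; conversely, for $m\le 2r$ one writes down an explicit index $i$ (with $\b(i)=(0,\ldots,0,1,0,1,\ldots,1,0)$ having $m-r-1$ ones) at which $\s_i=S(r,m)$ and $i+1$ is both a point of gain and of fall. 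Your recursive/duality derivation of $S(r,m)$ and your plan to track the branch correction through the $|u|u+v|$ recursion across the boundary $m=2r+1$ would get there too, but the gain-and-fall criterion turns what you correctly flag as ``the most delicate point'' into a two-line binary-weight argument, and it spares you from having to explicitly identify the maximizing section.
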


Recall that the dimension of the code $\text{RM}(r,m)$ is given by
$k(r,m) = \sum_{j=0}^r \binom{m}{j}$. We will find it convenient
to define $k(r',m')$ to be $\sum_{j=0}^{r'} \binom{m'}{j}$ 
for all non-negative integers $r',m'$, including when $r' > m'$. 
with the usual conventions that $\binom{0}{0} = 1$ and
$\binom{m'}{j} = 0$ for $j > m'$. Thus, for $r' \ge m' \ge 0$, 
$k(r',m') = 2^{m'}$. Following these conventions, we give an
expression for the difference $k(r,m) - \tau(r,m)$.

\begin{prop}
For the Reed-Muller code $\text{RM}(r,m)$, we have
$$
k(r,m) - \tau(r,m) 
 = \sum_{i=0}^{\min\{2(r-1),m-1\}} k(r-1-\lceil i/2 \rceil,m-1-i).
$$
\label{prop:Srm}
\end{prop}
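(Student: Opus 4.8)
The plan is to transform the right-hand side into the difference $k(r,m)-\tau(r,m)$ by reducing it to a double sum of binomial coefficients. Writing $S(r,m)$ for the right-hand side and expanding each term through the definition $k(r',m')=\sum_{p=0}^{r'}\binom{m'}{p}$, one gets
$$
S(r,m)=\sum_{i=0}^{\min\{2(r-1),m-1\}}\ \sum_{p=0}^{r-1-\lceil i/2\rceil}\binom{m-1-i}{p}.
$$
The first key step is to interchange the order of summation, summing over $p$ from $0$ to $r-1$ first. Since $\lceil i/2\rceil\le r-1-p$ is equivalent to $i\le 2(r-1-p)$, for each fixed $p$ the index $i$ runs from $0$ to $\min\{2(r-1-p),\,m-1\}$, so that $S(r,m)=\sum_{p=0}^{r-1}\sum_{i=0}^{\min\{2(r-1-p),m-1\}}\binom{m-1-i}{p}$.

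The second key step is to collapse the inner sum over $i$ by the hockey-stick identity $\sum_{i=0}^{t}\binom{m-1-i}{p}=\binom{m}{p+1}-\binom{m-1-t}{p+1}$. The leading terms sum to $\sum_{p=0}^{r-1}\binom{m}{p+1}=k(r,m)-1$, which already supplies the $k(r,m)$ of the proposition (up to the constant $1$). It then remains to identify the accumulated subtracted terms with $\tau(r,m)-1$. For this I would first reindex the closed form of $\tau(r,m)$ from Proposition~\ref{prop:RM_tau} by the substitution $j\mapsto r-j$, which in both cases rewrites the trelliswidth as a sum of terms $\binom{m-2r-1+2q}{q}$ (with the leading $j=0$ term equal to $\binom{m-2r-1}{0}=1$ in the first case, matching the standalone $1$ in the second). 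After setting $q=p+1$, the binomials produced by the hockey-stick step are exactly of this shape.

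The main obstacle, and the only place requiring genuine care, is matching the summation ranges and discarding spurious terms; this is precisely where the case split $m\ge 2r+1$ versus $m<2r+1$ of Proposition~\ref{prop:RM_tau} enters. When $m\ge 2r+1$ the truncation $\min\{2(r-1-p),m-1\}=2(r-1-p)$ is active for every $p$, the subtracted terms match $\tau(r,m)-1$ term by term, and $\min\{2(r-1),m-1\}=2(r-1)$ as needed for the upper limit. When $m<2r+1$ two subtleties appear: for small $p$ the truncation switches to $m-1$, contributing $\binom{0}{p+1}=0$ and thereby accounting for the extra $+1$ in the trelliswidth formula; and the reindexed $\tau$-sum starts at a higher lower limit than the subtracted sum, so I would have to check that the extra low-order terms $\binom{m-2r-1+2q}{q}$ with $q\le 2r-m$ all vanish, which holds because their upper index is nonnegative yet strictly smaller than the lower index $q$ (a short parity-by-parity check on $m$, with the range being empty when $m=2r$). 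Once these vanishing terms are confirmed, the two subtracted sums coincide, yielding $S(r,m)=k(r,m)-1-\bigl(\tau(r,m)-1\bigr)=k(r,m)-\tau(r,m)$, as claimed.
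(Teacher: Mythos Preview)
Your proposal is correct and uses essentially the same ingredients as the paper's proof: the hockey-stick identity $\sum_{i=0}^{t}\binom{m-1-i}{p}=\binom{m}{p+1}-\binom{m-1-t}{p+1}$, an interchange of the order of summation, and the reindexing $q=r-j$ of the $\tau(r,m)$ formula. The only difference is the direction of the computation: the paper starts from $k(r,m)-\tau(r,m)$ and works forward to the sum (splitting into the three cases $m\ge 2r+1$, $m=2r$, $m\le 2r-1$), whereas you start from the sum and work backward to $k(r,m)-\tau(r,m)$, folding the $m=2r$ case into your second case via the empty-range observation.
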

We present the algebraic manipulations required to prove this
proposition in Appendix~A. 

\medskip

It is instructive to explicitly write out some of the terms of the
summation in the last proposition. When $m \ge 2r$, we have
\begin{eqnarray}
k(r,m) - \tau(r,m) \ = \ k(r-1,m-1)+ k(r-2,m-2) + k(r-2,m-3) && \notag \\
 +\  k(r-3,m-4) + k(r-3,m-5) && \notag \\
 +\ \cdots + k(0,m-2r+2) + k(0,m-2r+1),
\label{Srm_eq1}
\end{eqnarray}
and when $m \leq 2r-1$, we have
\begin{eqnarray}
k(r,m) - \tau(r,m) \ = \ k(r-1,m-1)+ k(r-2,m-2) + k(r-2,m-3) && \notag \\
 +\  k(r-3,m-4) + k(r-3,m-5) && \notag \\
 +\ \cdots + k(r-1-\lceil{\textstyle\frac{m-2}{2}}\rceil,1) 
  + k(r-1-\lceil{\textstyle\frac{m-1}{2}}\rceil,0).
\label{Srm_eq2}
\end{eqnarray}

\subsection{Treewidth of $\text{RM}(r,m)$}

We state below our main result showing that the treewidth of a Reed-Muller code
equals its trelliswidth.

\begin{theorem}
The treewidth of the Reed-Muller code $\text{RM}(r,m)$
is given by 
$$
\k(r,m) = \tau(r,m) = 
\begin{cases}
\sum_{j=0}^r \binom{m-2j-1}{r-j} & \text{ if } m \ge 2r+1, \\
1 + \sum_{j=0}^{m-r-1} \binom{m-2j-1}{r-j} & \text{ if } m < 2r+1. \\
\end{cases}
$$
\label{thm:RM_treewidth}
\end{theorem}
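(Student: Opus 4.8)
The plan is to verify the hypothesis of Proposition~\ref{prop:k_eq_t} for $\text{RM}(r,m)$ using the edge-separator of Proposition~\ref{prop:cubic}, in direct analogy with the way Theorem~\ref{thm:mds_treewidth} used Proposition~\ref{prop:n/2} for MDS codes. First I would dispose of the degenerate cases: when $r=m$ the code is the whole space $\F_2^n$, for which $\tau(r,m)=1$ and every leaf of any tree decomposition already forces $\k_v=1$, so $\k(r,m)=\tau(r,m)$; the cases $m\le 1$ are trivial. For $0\le r<m$ and $n=2^m\ge 4$, given a cubic tree $T$ I would take the internal node $v$ supplied by Proposition~\ref{prop:cubic}, so that $n_3\in[n/2,2n/3]$; hence $n_3\ge 2^{m-1}$ and, since $n_1+n_2=n-n_3\le 2^{m-1}$, also $n_1,n_2\le 2^{m-1}$. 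The task then reduces to the purely combinatorial inequality $\sum_{i=1}^3 U_{n_i}(\cC)\le k(r,m)-\tau(r,m)$ for every such partition, which is exactly the condition demanded by Proposition~\ref{prop:k_eq_t}.

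The engine of the argument is the recursive $|u|u{+}v|$ structure of Reed-Muller codes, which in the standard bit order places the two halves of $\text{RM}(r,m)$ contiguously as $(u,u+v)$ with $u\in\text{RM}(r,m-1)$ and $v\in\text{RM}(r-1,m-1)$. Combining this with the shortening identity (\ref{eq:std_bit_order}), I would establish two recursions for the limited-support dimensions: for $0\le s\le 2^{m-1}$,
$$U_s(\text{RM}(r,m)) = U_s(\text{RM}(r-1,m-1)),$$
and for $0\le t\le 2^{m-1}$,
$$U_{2^{m-1}+t}(\text{RM}(r,m)) = k(r-1,m-1) + U_t(\text{RM}(r,m-1)).$$
The first holds because a codeword vanishing on the entire second half forces $v=u\in\text{RM}(r-1,m-1)$; the second follows by counting the pairs $(u,w)$ with $w=u+v$ supported on the first $t$ coordinates of the second half. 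Writing $t_3=n_3-2^{m-1}\in[0,2^{m-1}/3]$ and applying the first recursion to $n_1,n_2$ and the second to $n_3$, the target sum becomes
$$\sum_{i=1}^3 U_{n_i}(\cC) = k(r-1,m-1) + U_{n_1}(\text{RM}(r-1,m-1)) + U_{n_2}(\text{RM}(r-1,m-1)) + U_{t_3}(\text{RM}(r,m-1)),$$
with $n_1+n_2+t_3=2^{m-1}$. I would then run an induction on $m$, the inductive content being to bound the right-hand side by the closed form for $k(r,m)-\tau(r,m)$ in Proposition~\ref{prop:Srm}.

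The main obstacle is that this reduction is \emph{not} self-similar: the surviving terms no longer refer to a single code but split as two copies of $\text{RM}(r-1,m-1)$ and one copy of $\text{RM}(r,m-1)$, so a naive induction on the identical statement does not close. Overcoming this requires a strengthened inductive hypothesis that bounds the relevant mixed maximum, together with essential use of the cap $t_3\le 2^{m-1}/3$ (equivalently $n_3\le 2n/3$) to keep the largest part from contributing too much. This is precisely why Proposition~\ref{prop:cubic} rather than the weaker Proposition~\ref{prop:n/2} is needed here: the $n/2$-separator would permit two parts near $2^{m-1}$, and the bound would fail. I expect the cleanest route is to locate the extremal partition maximizing the mixed sum and to verify, by matching the recursion against the explicit form of Proposition~\ref{prop:Srm}, that its value equals $k(r,m)-\tau(r,m)$ (attained with equality, as one checks directly for the single-parity-check code $\text{RM}(m-1,m)$, where $U_s=s-1$ gives $\sum_i U_{n_i}=2^m-3$ at every internal node). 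The supporting binomial-coefficient identities are routine but lengthy, and I would relegate them to the appendix, consistent with the paper's handling of Propositions~\ref{prop:RM_tau} and~\ref{prop:Srm}.
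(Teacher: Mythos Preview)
Your plan has a genuine gap at the separator step. You propose to take \emph{any} internal node $v$ guaranteed by Proposition~\ref{prop:cubic} (so $n_3\in[n/2,2n/3]$) and then prove the combinatorial inequality $\sum_i U_{n_i}\le k(r,m)-\tau(r,m)$ for \emph{every} such triple. That inequality is false in general. For $\text{RM}(2,4)$ one has $k=11$, $\tau=5$, hence $k-\tau=6$; but the triple $(n_1,n_2,n_3)=(1,7,8)$ satisfies $n_3=8\in[8,\lfloor 32/3\rfloor]$ and is realized at an internal node of a caterpillar cubic tree on $16$ leaves, while $U_1+U_7+U_8=0+U_7(\text{RM}(1,3))+k(1,3)=0+3+4=7>6$. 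So picking ``the node supplied by Proposition~\ref{prop:cubic}'' is not enough: the constraint $n_3\le 2n/3$ by itself still allows $n_3=n/2$ together with $n_2=n/2-1$, and that is precisely the configuration that defeats the bound. Your stated reason for preferring Proposition~\ref{prop:cubic} over Proposition~\ref{prop:n/2}---that the latter ``would permit two parts near $2^{m-1}$''---is therefore misdiagnosed: Proposition~\ref{prop:cubic} permits the same thing.

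The paper's fix is not merely to invoke Proposition~\ref{prop:cubic}, but to pick, among all nodes $v$ with $n_{e_3(v),v}\in[n/2,2n/3]$, one $v^*$ that \emph{maximizes} $n_{e_3(v),v}$. A short argument (Lemma~\ref{lemma:n2*}) then forces $n_2^*<n/3$, so that $n_3^*\le\alpha:=\lfloor 2n/3\rfloor$ and $n_2^*\le\beta:=\lfloor n/3\rfloor$. With this extra control the paper proves $\sum_i U_{n_i^*}\le U_\alpha+U_\beta+U_1$ (Lemma~\ref{lemma:U_ineq}) and $U_\alpha+U_\beta+U_1=k(r,m)-\tau(r,m)$ (Lemma~\ref{lemma:Ua+Ub+U1}), using Wei's generalized Hamming weight hierarchy and an induction on $m$ via binary representations. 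Your $|u|u+v|$ recursions $U_s(\text{RM}(r,m))=U_s(\text{RM}(r-1,m-1))$ for $s\le 2^{m-1}$ and $U_{2^{m-1}+t}(\text{RM}(r,m))=k(r-1,m-1)+U_t(\text{RM}(r,m-1))$ are correct and give a genuinely different engine than the paper's, but they cannot close the argument until you add the maximization refinement and the resulting bound $n_2<n/3$; after that, the ``strengthened inductive hypothesis'' you allude to would need to be made explicit, since as you note the reduction mixes $\text{RM}(r-1,m-1)$ and $\text{RM}(r,m-1)$ and is not self-similar.
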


The rest of this section is devoted to a proof of the above result, 
which follows the strategy outlined in Section~\ref{sec:strategy}. 
Some of the technical details of the proof are presented in
Appendices B and C.

Let $\text{RM}(r,m)$ be given. If $m \le 2$, or $r = m$, then $\text{RM}(r,m)$
is an MDS code, which has been dealt with in Section~\ref{sec:mds}.
Henceforth, we will assume $m \ge 3$ and $r \le m-1$. 

Let $T$ be a cubic tree with $n = 2^m$ leaves, $m \ge 3$, 
and let $W = \{v \in V(T): n_{e_3(v),v} \in [n/2,2n/3]\}$. 
By Proposition~\ref{prop:cubic}, $W$ is non-empty. Let $v^* \in W$ 
be a node that achieves $\max\{n_{e_3(v),v}: v \in W\}$. 
Write $n_i^* = n_{e_i(v^*),v^*}$, $i = 1,2,3$.

\begin{lemma}
We have $n/6 < n_2^* < n/3.$
\label{lemma:n2*}
\end{lemma}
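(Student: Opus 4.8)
The plan is to exploit three basic facts about the node $v^*$. First, since $v^*$ is an internal node of the cubic tree $T$, the three components $T_{e_i(v^*),v^*}$ partition the leaves of $T$, so $n_1^* + n_2^* + n_3^* = n$. Second, by the ordering convention $n_1^* \le n_2^* \le n_3^*$ with $n_1^* > 1$. Third, by membership in $W$, we have $n_3^* \in [n/2, 2n/3]$. Since $n = 2^m$, neither $n/6$ nor $2n/3$ is an integer, which lets me upgrade non-strict inequalities to strict ones without extra work. I would establish the lower bound by pure arithmetic, and the upper bound by invoking the \emph{maximality} of $n_3^*$ over $W$; the latter is where the real content lies.

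For the lower bound, I would combine $n_1^* \le n_2^*$ with $n_1^* + n_2^* = n - n_3^*$ to obtain $2 n_2^* \ge n - n_3^*$. Because $n_3^*$ is an integer and $2n/3 \notin \Z$, the bound $n_3^* \le 2n/3$ is in fact strict, so $2 n_2^* \ge n - n_3^* > n - 2n/3 = n/3$, giving $n_2^* > n/6$. No appeal to maximality is needed here.

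For the upper bound $n_2^* < n/3$, I would argue by contradiction, assuming $n_2^* \ge n/3$, and examine the neighbour $u$ of $v^*$ across the edge $e_2(v^*)$. Note first that $n_2^* \le n - n_3^* \le n/2$ (using $n_3^* \ge n/2$), so $u$ sits inside $T_{e_2(v^*),v^*}$, a component with $n_2^* \ge n/3 > 2$ leaves, whence $u$ is internal. From $u$'s viewpoint, the edge $e_2(v^*)$ leads to the component containing $v^*$, which has $n - n_2^*$ leaves, while $u$'s other two incident edges subdivide the $n_2^*$ leaves of $T_{e_2(v^*),v^*}$ and hence each cut off fewer than $n_2^* \le n/2 \le n - n_2^*$ leaves. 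Thus the largest component count at $u$ is $n_{e_3(u),u} = n - n_2^*$. The assumption $n_2^* \ge n/3$ gives $n - n_2^* \le 2n/3$, and $n_2^* \le n/2$ gives $n - n_2^* \ge n/2$, so $n - n_2^* \in [n/2, 2n/3]$ and therefore $u \in W$. Maximality of $v^*$ now forces $n - n_2^* = n_{e_3(u),u} \le n_3^*$, i.e. $n_2^* \ge n - n_3^* = n_1^* + n_2^*$, whence $n_1^* \le 0$, contradicting $n_1^* > 1$. Hence $n_2^* < n/3$.

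The main obstacle is precisely this upper bound: naive arithmetic only yields $n_2^* \le n/2$, and closing the gap down to $n/3$ genuinely requires having chosen $v^*$ to maximize $n_{e_3(\cdot),\cdot}$ within $W$. The key mechanism is that stepping across $e_2(v^*)$ lands on a node that is \emph{still} in $W$ as soon as $n_2^* \ge n/3$, which is exactly what lets the maximality be triggered. I expect the only delicate bookkeeping to be the verification that $u$ is internal and that $n - n_2^*$ is indeed the top component count at $u$, both of which follow from the elementary size estimates above.
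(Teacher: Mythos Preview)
Your proof is correct and follows essentially the same approach as the paper: the lower bound is the same arithmetic (the paper phrases it as a contradiction, you do it directly), and for the upper bound both you and the paper step to the neighbour $u$ across $e_2(v^*)$, identify $n_{e_3(u),u} = n - n_2^* = n_1^* + n_3^*$, check that $u \in W$, and derive a contradiction from the maximality of $n_3^*$. You are more explicit than the paper about verifying that $u$ is internal and that $n - n_2^*$ is indeed the largest component count at $u$, but the argument is the same.
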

\begin{proof}
If $n_2^* < n/6$, then from the fact that $n_1^* \le n_2^*$, we obtain
$n_1^*+n_2^* < n/3$, so that $n_3^* > 2n/3$, a contradiction. 
So, $n_2 \ge n/6$. However, $n/6$ is not an integer for $n=2^m$, 
and so, $n_2^* > n/6$.

If $n_2^* \geq n/3$, then $n_1^*+n_3^* \leq 2n/3$. Let $v$ be the neighbour
of $v^*$ incident with edge $e_2(v^*)$. Then, setting $n_3 = n_{e_3(v),v}$,
we see that $n_3 = n_1^*+n_3^*$; see Figure~\ref{fig:1}. 
But this means that $n_3^* < n_3 \leq 2n/3$, which contradicts our
choice of $v^*$.
\end{proof}

\begin{figure}[t]
\centering \input{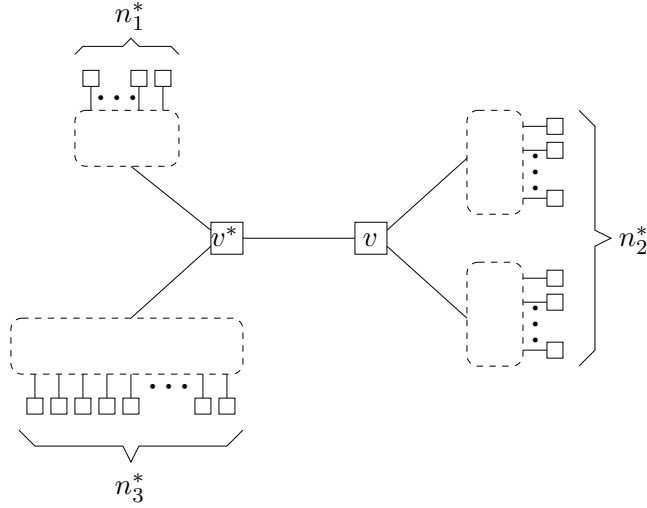}
\caption{For $v$, we have $n_3 = n_1^* + n_3^*$.}
\label{fig:1}
\end{figure}

We will show that $\sum_{i=1}^3 U_{n_i^*} \le k(r,m) - \tau(r,m)$,
which will prove Theorem~\ref{thm:RM_treewidth}
by virtue of Proposition~\ref{prop:k_eq_t}.
Here, and in all that follows, we use $U_{h}$ as shorthand for 
$U_{h}(\text{RM}(r,m))$,

Denote by $\alpha^{(m)}$ and $\beta^{(m)}$ 
the largest integers in $[0,2n/3]$ and $[0,n/3]$, respectively.
Explicitly,
\begin{equation}
\alpha^{(m)} = \begin{cases}
\frac23 \cdot 2^m - \frac13 & \text{ if $m$ is odd }, \\
\frac23 \cdot 2^m- \frac23 & \text{ if $m$ is even},
\end{cases}
\label{eq:alpha}
\end{equation}
and
\begin{equation}
\beta^{(m)} = \begin{cases}
\frac13 \cdot 2^m - \frac23 & \text{ if $m$ is odd }, \\
\frac13 \cdot 2^m- \frac13 & \text{ if $m$ is even}.
\end{cases}
\label{eq:beta}
\end{equation}
Equivalently, in binary form, 
\begin{equation}
\b(\alpha^{(m)}) = \begin{cases}
(1,0,1,0,1,\cdots,0,1) & \text{ if $m$ is odd}, \\
(0,1,0,1,\cdots,0,1) & \text{ if $m$ is even}, \\
\end{cases}
\label{eq:binary_alpha}
\end{equation}
and
\begin{equation}
\b(\beta^{(m)}) = \begin{cases}
(0,1,0,1,0,\cdots,1,0) & \text{ if $m$ is odd}, \\
(1,0,1,0,\cdots,1,0) & \text{ if $m$ is even}. \\
\end{cases}
\label{eq:binary_beta}
\end{equation}
When there is no ambiguity, we will drop the superscripts from
$\alpha^{(m)}$ and $\beta^{(m)}$ for notational ease.

Now, what we know is that $n_3^* \in [2^{m-1},\alpha]$ and 
$n_2^* \in [\lceil\frac16 \, 2^m\rceil,\beta]$. In fact,
it can be directly verified from the expression for $\alpha$
that $\lceil\frac16 \, 2^m\rceil = \alpha-2^{m-1}+1$.
We wish to show that $\sum U_{n_i^*} \le k(r,m) - \tau(r,m)$. 
We will do this in two steps:
first, we show in Lemma~\ref{lemma:U_ineq} below that 
$\sum U_{n_i^*} \le U_\alpha+U_\beta+U_1$, and then, 
we prove in Lemma~\ref{lemma:Ua+Ub+U1} that 
$U_{\alpha} + U_{\beta} + U_1 = k(r,m) - \tau(r,m)$.

Write $n_3^* = \alpha - i$ and $n_2^* = \beta-j$, so that
$n_1^* = 2^m - (n_3^*+n_2^*) = i+j+1$, where 
$i \in [0,\alpha-2^{m-1}]$ and $j \in [0,\beta-(\alpha-2^{m-1}+1)]$.
The following lemma shows that $\sum U_{n_i^*} \le U_\alpha+U_\beta+U_1$.

\begin{lemma}
For $i \in [0,\alpha-2^{m-1}]$, and $j \in [0,\beta-(\alpha-2^{m-1}+1)]$,
we have 
$$
(U_{\alpha} - U_{\alpha-i}) + (U_\beta - U_{\beta-j}) 
\ge U_{i+j+1} - U_1.
$$
\label{lemma:U_ineq}
\end{lemma}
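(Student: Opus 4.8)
The plan is to convert the inequality into a purely combinatorial statement about binary Hamming weights, and then establish that statement by exploiting the alternating bit patterns of $\alpha$ and $\beta$. The crucial input is the following description of the maximal limited-support subcode dimensions of $\text{RM}(r,m)$ in the standard bit order, which I would extract from the known generalized Hamming weight hierarchy (\cite{Wei91}, \cite{KTFL93}): for $0 \le h \le 2^m$,
\[
U_h = \big|\{\, s : 0 \le s \le h-1,\ w_H(\b(s)) \ge m-r \,\}\big|,
\]
where $w_H(\b(s))$ is the number of $1$'s in the length-$m$ binary expansion of $s$. Equivalently, the increment $U_h - U_{h-1}$ equals $1$ precisely when $w_H(\b(h-1)) \ge m-r$, and is $0$ otherwise. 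One checks that there are $\sum_{w \ge m-r}\binom{m}{w} = k(r,m)$ such positions, and that the smallest is $h = 2^{m-r}$, matching the minimum distance, so this indeed recovers the full weight hierarchy of the code.

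With this identity in hand, every difference $U_b - U_a$ (for $a \le b$) counts the integers $s \in [a,b-1]$ with $w_H(\b(s)) \ge m-r$. First I would rewrite $U_\alpha - U_{\alpha - i}$, $U_\beta - U_{\beta - j}$, and $U_{i+j+1} - U_1$ (using $U_1 = 0$, valid since $2^{m-r} \ge 2$ when $r \le m-1$) as the numbers of weight-$\ge(m-r)$ integers in the intervals $[\alpha - i, \alpha - 1]$, $[\beta - j, \beta - 1]$, and $[1, i+j]$, respectively. The lemma then becomes the assertion that the first two intervals together contain at least as many high-weight integers as the third. The admissible ranges force $i+j \le \beta - 1 < 2^{m-1}$ and $\alpha - i \ge 2^{m-1}$, so the source interval $[1,i+j]$ and the $\beta$-interval lie entirely below $2^{m-1}$ (most significant bit $0$), while the $\alpha$-interval lies entirely at or above $2^{m-1}$ (most significant bit $1$).

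This last observation drives the argument, which I would carry out by induction on $m$ after splitting every integer on its most significant bit $b_{m-1}$. An integer in $[\alpha^{(m)} - i, \alpha^{(m)} - 1]$ has top bit $1$, hence is high-weight iff its lower $m-1$ bits have weight $\ge (m-1)-r$; combined with the identity $\alpha^{(m)} - 2^{m-1} = \beta^{(m-1)}$, this shows that the $\alpha$-interval at level $m$ maps exactly onto a $\beta$-interval at level $m-1$ with parameter $r$. Dually, the $\beta$-interval and the source interval have top bit $0$, so they recurse to level $m-1$ with the threshold $m-r$ rewritten as $(m-1)-(r-1)$, and here the companion identity $\beta^{(m)} = \alpha^{(m-1)}$ sends the $\beta$-interval to an $\alpha$-interval at level $m-1$ with parameter $r-1$. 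I expect the main obstacle to be exactly this bookkeeping: the roles of $\alpha$ and $\beta$ swap between consecutive levels, and the three intervals do not recurse with a common parameter (the $\alpha$-branch keeps $r$ while the other two drop to $r-1$), so the induction hypothesis must be formulated in a sufficiently general form—with free interval lengths and covering both $\alpha$- and $\beta$-type intervals at both thresholds—to absorb these shifts. As a way to organize the step I would try to reduce matters to the two separate bounds $U_\beta - U_{\beta-j} \ge |\{s \in [1,j] : w_H(\b(s)) \ge m-r\}|$ and $U_\alpha - U_{\alpha-i} \ge |\{s \in [j+1,\,i+j] : w_H(\b(s)) \ge m-r\}|$ and add them; each is a comparison of two equal-length intervals that, thanks to the alternating patterns of $\b(\alpha)$ and $\b(\beta)$, is resolved by the same weight-splitting recursion. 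The routine interval arithmetic, the handling of the parameter-shift in the second bound, and the base cases (small $m$, and $r=0$, which is the repetition/MDS case) I would relegate to the appendix, since small examples such as $\text{RM}(2,5)$ already show the inequality can be tight and therefore genuinely requires the full combinatorial input rather than a crude monotonicity estimate.
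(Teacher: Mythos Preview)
Your reformulation is exactly the one the paper uses: $U_h$ counts integers in $[0,h-1]$ of Hamming weight $\ge m-r$, so the lemma becomes the interval inequality
\[
w_{m-r}([\alpha-i,\alpha-1]) + w_{m-r}([\beta-j,\beta-1]) \ \ge \ w_{m-r}([1,i+j]),
\]
and the paper also proves this by induction on $m$. So the overall plan is right.

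Where your execution diverges is in the choice of bit to recurse on and in the proposed split. The paper recurses on the \emph{least} significant bit: it strengthens the statement to the full vector $\w^{(m)}(S)=[w_1(S),\ldots,w_m(S)]$ (i.e., all thresholds at once, with $j$ allowed up to $\alpha^{(m)}-2^{m-1}$), and then observes that doubling a set, or doubling and adding $1$, shifts all weights uniformly by $0$ or $1$. Because the shift is the same for all three intervals simultaneously, the vector inequality at level $m$ transports cleanly to level $m+1$ (this is their Lemma~\ref{lem:1}); a short parity-filling argument (their Lemma~\ref{lem:2}) then covers the intermediate parities. Your MSB split does not have this uniformity: after stripping $b_{m-1}$, the $\alpha$-interval becomes a $\beta^{(m-1)}$-interval with the threshold dropped by one, while the $\beta$-interval and the source interval keep the same threshold. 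You correctly flag this as ``the three intervals do not recurse with a common parameter,'' but the fix you propose---splitting into the two separate bounds $U_\beta-U_{\beta-j}\ge w_{m-r}([1,j])$ and $U_\alpha-U_{\alpha-i}\ge w_{m-r}([j+1,i+j])$---is not justified: the second compares the $\alpha$-interval against a length-$i$ window with an \emph{arbitrary} left endpoint $j+1$, which is a strictly stronger and structurally different statement than the one you are trying to prove, and it does not obviously recurse either (after MSB removal it still mixes thresholds). This is the genuine gap in your sketch.

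The cleanest repair is the paper's: keep the three-interval inequality intact, strengthen to all thresholds (the componentwise vector form), and recurse on the least significant bit via the identities $\alpha^{(m+1)}=2\alpha^{(m)}+1$, $\beta^{(m+1)}=2\beta^{(m)}$ (even $m$) and their odd-$m$ analogues. Your identities $\alpha^{(m)}-2^{m-1}=\beta^{(m-1)}$ and $\beta^{(m)}=\alpha^{(m-1)}$ are correct and useful bookkeeping, but the LSB recursion is what makes the induction close without the threshold mismatch.
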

\begin{proof} 
See Appendix~B. 
\end{proof}

\begin{lemma}
$U_{\alpha} + U_{\beta} + U_1 = k(r,m) - \tau(r,m)$.
\label{lemma:Ua+Ub+U1}
\end{lemma}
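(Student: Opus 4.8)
The plan is to reduce the identity to a computation of two left‑shortened dimensions of $\text{RM}(r,m)$ and then match the result against Proposition~\ref{prop:Srm}. By (\ref{eq:std_bit_order}) we have $U_\alpha = \dim(\text{RM}(r,m)_{[0,\alpha-1]})$ and $U_\beta = \dim(\text{RM}(r,m)_{[0,\beta-1]})$. Since we are in the regime $m\ge 3$, $r\le m-1$, the minimum distance $2^{m-r}\ge 2$ rules out weight‑one codewords, so $U_1=0$ and the claim reduces to $U_\alpha+U_\beta=k(r,m)-\tau(r,m)$.

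First I would record two recursions for left‑shortened dimensions that come from the $|u|u+v|$ construction of $\text{RM}(r,m)$ (and are implicit in \cite{KTFL93}). Splitting the index set into its two halves according to the top bit $b_{m-1}$, and using the standard shortening–projection dimension identity on the second half, a short argument gives, for $1\le t\le 2^{m-1}$,
\[
U_{h}(\text{RM}(r,m)) = U_{h}(\text{RM}(r-1,m-1)) \qquad (1\le h\le 2^{m-1}),
\]
and
\[
U_{2^{m-1}+t}(\text{RM}(r,m)) = k(r-1,m-1) + U_{t}(\text{RM}(r,m-1)).
\]

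Next I would exploit the self‑similarity of the alternating binary patterns (\ref{eq:binary_alpha}) and (\ref{eq:binary_beta}): reading off the bits shows $\beta^{(m)}=\alpha^{(m-1)}<2^{m-1}$ and $\alpha^{(m)}=2^{m-1}+\beta^{(m-1)}\ge 2^{m-1}$. Feeding these into the two recursions and writing $A(r,m)=U_{\alpha^{(m)}}(\text{RM}(r,m))$, $B(r,m)=U_{\beta^{(m)}}(\text{RM}(r,m))$, I obtain $B(r,m)=A(r-1,m-1)$ together with $A(r,m)=k(r-1,m-1)+B(r,m-1)=k(r-1,m-1)+A(r-1,m-2)$. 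Hence the quantity of interest $S(r,m):=A(r,m)+B(r,m)=A(r,m)+A(r-1,m-1)$ satisfies the clean recursion
\[
S(r,m)=k(r-1,m-1)+k(r-2,m-2)+S(r-1,m-2).
\]

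Finally I would close the argument by induction on $m$. A direct reindexing of the sum in Proposition~\ref{prop:Srm} (peeling off the $i=0,1$ terms and substituting $i\mapsto i-2$ in the rest) shows that $k(r,m)-\tau(r,m)$ obeys the \emph{same} recursion under $(r,m)\mapsto(r-1,m-2)$, so the inductive step is immediate. The main obstacle I anticipate is the bookkeeping at the base of the recursion rather than the step itself. Since $(r,m)\mapsto(r-1,m-2)$ preserves the sign of $m-2r$, the two regimes $m\ge 2r$ and $m\le 2r-1$ are stable, and they must be shown to bottom out at the correct terminal term: at $r=1$ one uses $A(0,\cdot)=0$ (because $\alpha^{(m')}<2^{m'}$ makes every shortening of the repetition code $\text{RM}(0,m')$ trivial), giving $S(1,m)=A(1,m)=1=k(0,m-1)$; and at $m\le 2$ (equivalently $r=m$) one uses the MDS values from Section~\ref{sec:mds}. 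One then has to verify that the parity of $m$ relative to $2r$ routes the halting case of $S$ to $k(0,\cdot)$ or to $k(\cdot,0)$ so that it matches the corresponding terminal term of (\ref{Srm_eq1}) and (\ref{Srm_eq2}), respectively; this is the only delicate part of the calculation.
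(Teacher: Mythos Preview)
Your approach is correct and genuinely different from the paper's. The paper computes $U_\alpha$ and $U_\beta$ \emph{directly}: it invokes Wei's determination of the generalized Hamming weight hierarchy of $\text{RM}(r,m)$ (the $(r,m)$-canonical representation, Theorem~\ref{thm:wei}), reads off closed forms for $U_\alpha$ and $U_\beta$ in each parity/regime (equations (\ref{Us_eq1})--(\ref{Us_eq3}) of Appendix~C), and then matches these term-by-term with the expansions (\ref{Srm_eq1})--(\ref{Srm_eq2}) of $k(r,m)-\tau(r,m)$. By contrast, you avoid Wei's theorem entirely: you exploit the $|u|u+v|$ structure to get the two shortening recursions, combine them with the self-similarity $\beta^{(m)}=\alpha^{(m-1)}$, $\alpha^{(m)}=2^{m-1}+\beta^{(m-1)}$, and reduce the identity to showing that $S(r,m)$ and $k(r,m)-\tau(r,m)$ obey the same two-step recursion $(r,m)\mapsto(r-1,m-2)$ with matching boundary values. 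The paper's route yields more: it gives separate closed forms for $U_\alpha$ and $U_\beta$, not just their sum, at the cost of importing Wei's result. Your route is more elementary and arguably explains \emph{why} the identity holds (the alternating bit patterns of $\alpha,\beta$ reproduce themselves under the halving that drives the Plotkin recursion), but it pays for this with the boundary bookkeeping you flag; in particular, note that ``$m\le 2$'' is not equivalent to ``$r=m$'' along the recursion --- the terminal case in the low-rate regime is $r'=m'$ (reached when the original $m=r+1$), while in the high-rate regime it is $r'=0$ or $r'=1$, and you should state the two exits separately.
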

\begin{proof}
The minimum distance of $\text{RM}(r,m)$ is $2^{m-r}$. Since we have assumed
$r \le m-1$, the minimum distance is at least 2, and hence, $U_1 = 0$.
In Appendix~C, we show the following: when $m \ge 2r$,
\begin{equation}
U_{s} = 
\begin{cases}
\sum_{i=0}^{r-1} k(r-1-i,m-1-2i) & \text{ if } s = \alpha, \\
\sum_{i=1}^{r-1} k(r-1-i,m-2i) & \text{ if } s = \beta. 
\end{cases}
\label{Us_eq1}
\end{equation}
Examining the above summations term-by-term, it may be verified that
the alternate terms on the right-hand side of (\ref{Srm_eq1}),
beginning with $k(r-1,m-1)$, sum to $U_\alpha$, 
while the remaining terms sum to $U_\beta$. Hence, when $m \ge 2r$,
the statement of the lemma holds.

When $m < 2r$, we show in Appendix~C that
\begin{equation}
U_{\alpha} = 
\begin{cases}
\sum_{i=0}^{\frac{m-1}{2}} k(r-1-i,m-1-2i) & \text{ if $m$ is odd}, \\
\sum_{i=0}^{\frac{m-2}{2}} k(r-1-i,m-1-2i) & \text{ if $m$ is even}. 
\end{cases}
\label{Us_eq2}
\end{equation}
and
\begin{equation}
U_{\beta} = 
\begin{cases}
\sum_{i=1}^{\frac{m-1}{2}} k(r-1-i,m-2i) & \text{ if $m$ is odd}, \\
\sum_{i=1}^{\frac{m}{2}} k(r-1-i,m-2i) & \text{ if $m$ is even}. 
\end{cases}
\label{Us_eq3}
\end{equation}
This time, it can be seen that the alternate terms on the right-hand side 
of (\ref{Srm_eq2}), beginning with $k(r-1,m-1)$, sum to $U_\alpha$, 
while the remaining terms sum to $U_\beta$. This completes the proof of 
the lemma.
\end{proof}

With this, the proof of Theorem~\ref{thm:RM_treewidth} is complete.

\section{Concluding Remarks}

In this paper, we proved the surprising fact that for the families
of MDS and Reed-Muller codes, if we use the maximum dimension of
local constraint codes to measure the complexity of a graphical realization,
then there is no advantage to be gained in going from trellis realizations
to cycle-free realizations on more complex tree topologies. This is 
particularly surprising for Reed-Muller codes, given that they 
have a natural binary-tree structure arising from the recursive
$|u|u+v|$ construction (see e.g.\ \cite{For03}). Of course, the situation
could be different if we used some other measure for the complexity of
a graphical realization, for example, the sum of the local constraint
dimensions.

It is also quite remarkable that the proof strategy outlined in
Section~\ref{sec:strategy} -- namely, identifying in any cubic tree $T$ 
a node $v \in V(T)$ such that $\k_v \ge \tau(\cC)$ for \emph{every} 
tree decomposition of the code $\cC$ on $T$ --- succeeds for MDS and 
Reed-Muller codes. As noted in that section, this strategy 
ignores the role played by the coordinate assignment $\omega$ 
in determining the local constraint code dimension, $\k_v$. It seems
unlikely that this method of proof would succeed for other code families.
It would of course be interesting to devise a set of tools that could be 
used to compute treewidth, or simply to determine whether or not treewidth 
can be strictly less than trelliswidth, for other families of algebraic codes.

\section*{Appendix A: Proofs of Propositions~\ref{prop:RM_tau} and 
\ref{prop:Srm}}

In this appendix, we compute the branch complexity of the minimal trellis
representation of $\text{RM}(r,m)$ in the standard bit order, from
which the expressions in Proposition~\ref{prop:RM_tau} and \ref{prop:Srm}
are obtained. We refer the reader to the survey by Vardy \cite{Var98} 
for the necessary background on the theory of trellis representations.

Let $\tau(r,m)$ and $\s(r,m)$ denote, respectively, the branch complexity 
and state complexity of the minimal trellis representation of $\text{RM}(r,m)$
in the standard bit order. Berger and Be'ery \cite{BB93} gave an explicit
expression for $\s(r,m)$:
$$
\s(r,m) = \sum_{j=0}^{\min\{r,m-r-1\}} \binom{m-2j-1}{r-j}.
$$
A different derivation of the above was given by Blackmore and Norton 
\cite{BN00}. We rely heavily on tools from \cite{BN00} to prove 
the following result, which is equivalent to Proposition~\ref{prop:RM_tau}.

\begin{prop}
$$
\tau(r,m) = 
\begin{cases} 
\s(r,m) & \text{ if } m \geq 2r+1, \\
\s(r,m) + 1 & \text{ if } m < 2r + 1.
\end{cases}
$$
\label{app_prop1}
\end{prop}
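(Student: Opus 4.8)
The plan is to combine a soft, code-independent bound --- that for the minimal trellis of any binary linear code the branch complexity exceeds the state complexity by at most one --- with the fine structure of the standard-bit-order minimal trellis of $\text{RM}(r,m)$ supplied by \cite{BN00}, in order to decide exactly when that extra unit is attained. The soft bound handles the ``$\le$'' direction uniformly; the \cite{BN00} structure is what forces the split at $m = 2r+1$.

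First I would put $\tau(r,m)$ and $\s(r,m)$ on a common footing. The main text already gives, via (\ref{eq:RM2}) together with (\ref{eq:std_bit_order}), the exact expression $\tau(r,m) = k(r,m) - \min_{0 \le i \le n-1}(U_i + U_{n-1-i})$. By the standard state-space theorem (see \cite{Var98}), the state dimension at the cut between coordinates $i-1$ and $i$ is $s_i = k(r,m) - \dim(\cC_{[0,i-1]}) - \dim(\cC_{[i,n-1]})$, which by (\ref{eq:std_bit_order}) equals $k(r,m) - U_i - U_{n-i}$; hence $\s(r,m) = k(r,m) - \min_{0 \le i \le n}(U_i + U_{n-i})$. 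Since $d_1 < d_2 < \cdots$, the function $U_s$ is nondecreasing with increments in $\{0,1\}$, so $U_{n-1-i} \le U_{n-i} \le U_{n-1-i}+1$; comparing the two minima then gives the sandwich $\s(r,m) \le \tau(r,m) \le \s(r,m)+1$. This reduces the proposition to deciding, for each regime of $m$ and $r$, which of the two values $\tau(r,m)$ takes.

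Next I would identify precisely when the upper bound is met. Writing $b_i = k(r,m)-U_i-U_{n-1-i}$, one has $b_i = s_i + (U_{n-i}-U_{n-1-i}) = s_{i+1} + (U_{i+1}-U_i)$, so $b_i = \s(r,m)+1$ forces $s_i = s_{i+1} = \s(r,m)$ and $U_{n-i}-U_{n-1-i} = U_{i+1}-U_i = 1$ simultaneously. In trellis-oriented-generator-matrix terms this says that coordinate $i$ is \emph{doubly active} --- one generator span ends there and another begins there --- at a coordinate whose two flanking cuts both carry the maximal state dimension. Thus $\tau(r,m) = \s(r,m)+1$ exactly when the maximal state complexity of $\text{RM}(r,m)$ is attained on a plateau of at least two consecutive cuts containing such a doubly-active coordinate; and when every state-maximizing cut is an isolated peak, no $i$ has $s_i = s_{i+1} = \s(r,m)$, forcing $b_i \le \s(r,m)$ for all $i$ and hence $\tau(r,m) = \s(r,m)$.

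The last and hardest step is to read off this local plateau-versus-peak behaviour from the profile $s_i = k(r,m) - U_i - U_{n-i}$, using the generator-span description of \cite{BN00}. I expect to show that the maximum of this profile sits on a genuine plateau carrying a doubly-active coordinate precisely when $m < 2r+1$, and is realized only at isolated peaks when $m \ge 2r+1$; this establishes the proposition, and substituting the Berger--Be'ery formula for $\s(r,m)$ then recovers the explicit closed form of Proposition~\ref{prop:RM_tau}. The main obstacle is exactly this combinatorial analysis: the RM state profile is self-similar rather than unimodal --- its global maximum is generally off-centre, as the small instance $\text{RM}(1,3)$ (profile $0,1,2,3,2,3,2,1,0$, with twin peaks flanking a central dip) already illustrates --- so locating the maximizing cuts and checking the double-activity condition there cannot be done by a convexity or symmetry shortcut and genuinely requires the recursive span structure of \cite{BN00}.
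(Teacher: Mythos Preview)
Your framework --- the sandwich $\s(r,m) \le \tau(r,m) \le \s(r,m)+1$ and the characterization that $\tau_i = \s(r,m)+1$ forces $s_i = s_{i+1} = \s(r,m)$ together with a doubly-active coordinate --- is correct and is essentially the same reduction the paper uses. The difference is that the paper finishes the argument in a few lines, whereas you stop at ``I expect to show'' and flag the remaining step as the main obstacle.

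The step you are worried about is much shorter than you think, and the reason is that you do not need to understand the full state profile at all. The two facts from \cite{BN00} that dispatch it are: (i) Proposition~2.2, which says that for $\text{RM}(r,m)$ in the standard bit order, coordinate $i$ is a point of gain iff $|\b(i)|_1 \le r$ and a point of fall iff $|\b(i)|_0 \le r$; and (ii) Theorem~2.11, which explicitly identifies indices $i$ at which $\s_i = \s(r,m)$. With (i), the ``only if'' direction is immediate: if some $\tau_i = \s(r,m)+1$, then the relevant coordinate is simultaneously a point of gain and a point of fall, so $m = |\b(\cdot)|_1 + |\b(\cdot)|_0 \le 2r$. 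No information about where the state maximum lives is needed for this direction, so your concern about self-similarity and off-centre peaks is irrelevant here. For the ``if'' direction ($m \le 2r$), the paper simply writes down a specific $i$ whose binary expansion has exactly $m-r-1$ ones placed so that (ii) guarantees $\s_i = \s(r,m)$, and then checks via (i) that $i+1$ is both a point of gain and a point of fall; this gives $\tau_i = \s(r,m)+1$.

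So your plan is sound but incomplete as written. Replace the vague ``read off the plateau-versus-peak behaviour from the recursive span structure'' with a direct appeal to Proposition~2.2 and Theorem~2.11 of \cite{BN00}, and the proof closes in a paragraph.
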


We introduce some terminology and notation that will be needed in
the proof of the proposition. Let $\cC$ be the 
code $\text{RM}(r,m)$ in the standard bit order, and let $n = 2^m$. 
Let $\cT$ be the minimal trellis of $\cC$. 
For $i=0,1,\ldots,n$, the dimension of the state space at depth $i$ in $\cT$
is denoted $\s_i$. Thus, $\s(r,m) = \max_i \s_i$. For $i = 0,1,\ldots,n-1$,
we denote by $\tau_i$ the dimension of the branch space between the 
state spaces at depths $i$ and $i+1$; then, $\tau(r,m) = \max_i \tau_i$.

The following definitions were made in \cite{BN00} for $0 \le i \le n-1$:
\begin{itemize}
\item[(a)] if $\dim(\cC_{[i+1,n-1]}) = \dim(\cC_{[i-1,n-1]}) - 1$, 
then $i$ is called a \emph{point of gain} of $\cC$; and
\item[(b)] if $\dim(\cC_{[0,i]}) = \dim(\cC_{[0,i-1]}) + 1$,
then $i$ is called a \emph{point of fall} of $\cC$.
\end{itemize}

As per our notation from Section~\ref{sec:RM}, $\b(i)$ denotes 
the $m$-bit binary representation of $i$, $0 \le i \le n-1$.
Let $|\b(i)|_0$ and $|\b(i)|_1$ denote the number of $0$s and $1$s,
respectively, in $\b(i)$. 

\begin{lemma}[\cite{BN00}, Proposition~2.2]
For $0 \le i \le n-1$,
\begin{itemize} 
\item[(a)] $i$ is a point of gain of $\cC$ iff $|\b(i)|_1 \le r$;
\item[(b)] $i$ is a point of fall of $\cC$ iff $|\b(i)|_0 \le r$.
\end{itemize}
\label{BN_lemma}
\end{lemma}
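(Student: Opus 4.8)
The plan is to translate each dimension condition into a statement about the evaluation vectors of degree-at-most-$r$ Boolean polynomials, and then prove part (b) directly, deducing part (a) by a symmetry. Writing $n = 2^m$, the defining condition for a point of fall, $\dim(\cC_{[0,i]}) = \dim(\cC_{[0,i-1]}) + 1$, holds exactly when $\cC$ contains a codeword supported on $[0,i]$ whose $i$-th coordinate is nonzero; equivalently, when some $f \in P^m_r$ satisfies $f(\b(i)) \neq 0$ while $f(\b(j)) = 0$ for all $j > i$ (so the maximal index in the support is exactly $i$). Thus I would first reduce part (b) to showing that such an $f$ exists if and only if $|\b(i)|_0 \le r$. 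The point-of-gain condition reformulates dually: there is an $f \in P^m_r$ nonzero at $\b(i)$ but vanishing at all $\b(j)$ with $j < i$, i.e.\ the minimal support index equals $i$.

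For sufficiency in (b), suppose $|\b(i)|_0 \le r$ and set $Z = \{t : b_t(i) = 0\}$. I would pick any $r$-element coordinate set $S \supseteq Z$ and take $f$ to be the indicator of the affine subspace $\{x : x_t = b_t(i) \text{ for } t \in S\}$, namely $f(x) = \prod_{t \in S}(1 + x_t + b_t(i))$, which has degree $|S| = r$ and hence lies in $\text{RM}(r,m)$. Maximizing the integer index $\sum_t x_t 2^t$ over this subspace forces every free coordinate ($t \notin S$, so $t \notin Z$, so $b_t(i) = 1$) to equal $1$ and every fixed coordinate to equal $b_t(i)$; the maximizer is precisely $\b(i)$. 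Hence the support of $f$ lies in $[0,i]$ and contains $i$, witnessing that $i$ is a point of fall.

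The necessity direction is the crux of the argument. Given a point of fall at $i$ with witness $f$, I would restrict $f$ to the subcube $B = \{x : x_t = b_t(i) \text{ for all } t \notin Z\}$, on which only the coordinates in $Z$ vary. Every point of $B$ other than $\b(i)$ arises by flipping some $0$-bits of $\b(i)$ (all lying in $Z$) to $1$, and therefore has strictly larger integer index; so $f$ vanishes on $B \setminus \{\b(i)\}$ and is nonzero at $\b(i)$. Consequently the multilinear restriction $f|_B$, a polynomial in the variables $\{x_t : t \in Z\}$, is the indicator of the single corner $\b(i)$, which equals $\prod_{t \in Z}(1 + x_t)$ and has degree $|Z|$. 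Since substituting constants into $f$ cannot raise its degree, $|\b(i)|_0 = |Z| \le \deg f \le r$. I expect the main obstacle to be making rigorous that this single-corner indicator genuinely forces the full degree $|Z|$; the clean way is to invoke the uniqueness of the multilinear representation of a function on $\{0,1\}^Z$, which I would state explicitly before concluding.

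Finally, I would obtain part (a) from part (b) by symmetry rather than repeating the work. The affine involution $x \mapsto \1 - x$ preserves polynomial degree, so it is an automorphism of $\text{RM}(r,m)$; on indices it acts as $j \mapsto (n-1)-j$, reversing the standard bit order and sending $\b(j)$ to its bitwise complement. Under this relabeling the intervals $[0,i]$ and $[i,n-1]$ are interchanged, so points of fall correspond to points of gain, while $|\b(i)|_0$ becomes $|\1 - \b(i)|_0 = |\b(i)|_1$. Applying the already-proved part (b) to the relabeled (but identical) code then yields the characterization $|\b(i)|_1 \le r$ for points of gain, completing the proof.
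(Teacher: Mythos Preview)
The paper does not give its own proof of this lemma: it is quoted verbatim as Proposition~2.2 of \cite{BN00} and used as a black box. So there is nothing in the paper to compare your argument against.

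That said, your proof is correct and self-contained. The sufficiency direction of (b) is standard --- the product $\prod_{t\in S}(1+x_t+b_t(i))$ is exactly the affine-subcube indicator one expects. Your necessity argument is the interesting part and it works: restricting to the subcube $B$ obtained by freezing the $1$-coordinates of $\b(i)$ isolates a function on $\{0,1\}^{Z}$ that is supported on a single corner, and uniqueness of multilinear representations over $\F_2$ then forces its degree to be $|Z|$, bounding $|\b(i)|_0\le r$. The one point you flag as a potential obstacle --- that the single-corner indicator really has full degree $|Z|$ --- is not an obstacle at all: the monomial $\prod_{t\in Z}x_t$ appears with coefficient~$1$ in $\prod_{t\in Z}(1+x_t)$, and uniqueness of the multilinear form is immediate from a dimension count ($2^{|Z|}$ monomials, $2^{|Z|}$ function values). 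Finally, deriving (a) from (b) via the bit-complement involution $j\mapsto n-1-j$ is clean and exactly the right symmetry; you should state explicitly that ``$i$ is a point of gain'' is equivalent to ``$n-1-i$ is a point of fall'' under this automorphism, which then transports the $|\b(\cdot)|_0\le r$ criterion to the $|\b(\cdot)|_1\le r$ criterion.
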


\medskip

\noindent \emph{Proof of Proposition~\ref{app_prop1}}.\ \
It is a fact that for any minimal trellis representation,
branch complexity either is equal to the state complexity or is exactly one
more than the state complexity. In particular, 
$\s(r,m) \le \tau(r,m) \le \s(r,m)+1$. So, to prove 
Proposition~\ref{app_prop1}, it suffices to show that 
\begin{equation}
\tau(r,m) = \s(r,m) + 1 \text{ iff } m \le 2r.
\label{app_eq1}
\end{equation}

Suppose that $\tau(r,m) = \tau_i$ for some $i \in [0,n-1]$. From the 
local behaviour of $\cT$ described in \cite[p.\ 44]{BN00}, it follows that
we can have $\tau_i = \s(r,m) + 1$ iff $\s_i = \s(r,m)$ and 
$i+1$ is a point of gain as well as a point of fall of $\cC$.

Thus, if $\tau_i = \s(r,m)+1$, then by Lemma~\ref{BN_lemma},
$m = |\b(i+1)|_1 + |\b(i+1)|_0 \le 2r$. This proves the ``only if''
direction of (\ref{app_eq1}).

Conversely, suppose $m \le 2r$. The proposition is clearly true if $m=r$, 
since $\text{RM}(m,m) = \{0,1\}^{2^m}$, and we have $\s(m,m) = 0$ and
$\tau(m,m) = 1$. So, we may assume $m \ge r+1$.
Take $i$ to be such that 
$\b(i) = (0,0\ldots,0,1,0,1,0,\ldots,1,0)$, with $|\b(i)|_1 = m-r-1$.
Then, by Theorem~2.11 in \cite{BN00}, $\s_i = \s(r,m)$.
Also, $\b(i+1) = (1,0,\ldots,0,1,0,1,0,\ldots,1,0)$, with
$|\b(i+1)|_1 = m-r \le r$ and $|\b(i+1)|_0 = m-(m-r) =r$. Hence, 
by Lemma~\ref{BN_lemma}, $i+1$ is a point of gain as well as a point of fall
of $\cC$. Hence, $\tau_i = \s(r,m) + 1$, which completes the proof of 
(\ref{app_eq1}), and hence, of Proposition~\ref{app_prop1}.
\qed

\medskip

We next present the algebraic manipulations needed to prove
Proposition~\ref{prop:Srm}.

\medskip

\noindent \emph{Proof of Proposition~\ref{prop:Srm}}.\ \ 
We divide the proof into three cases.

\noindent \underline{Case 1: $m \ge 2r+1$}.\ We have
\begin{eqnarray*}
k(r,m) - \tau(r,m) 
&=&  \sum_{j=0}^r \binom{m}{j} - \sum_{j=0}^r \binom{m-2j-1}{r-j} \\
&=& \sum_{j=0}^r \binom{m}{j} - \sum_{j=0}^r \binom{m-2(r-j)-1}{j} \\
&=& \sum_{j=1}^r \left[\binom{m}{j} - \binom{m-2(r-j)-1}{j}\right] \\
&\stackrel{(a)}{=}& \sum_{j=1}^r \sum_{i=0}^{2(r-j)} \binom{m-1-i}{j-1} \\
&\stackrel{(b)}{=}& \sum_{i=0}^{2(r-1)} \sum_{j=1}^{r-\lceil{i/2}\rceil} \binom{m-1-i}{j-1} \\
&=& \sum_{i=0}^{2(r-1)} k(r-1-\lceil{i/2}\rceil,m-1-i).
\end{eqnarray*}
In the above chain of equalities, equality~(a) uses the fact that
for integers $a < b$ and $j \ge 1$, we have $\binom{b}{j} - \binom{a}{j} 
= \sum_{q=a}^{b-1} \binom{q}{j-1}$; this is just repeated application
of the identity $\binom{b}{j} = \binom{b-1}{j-1} + \binom{b-1}{j}$.
Equality~(b) is obtained by exchanging the order of the summations in 
$i$ and $j$.

\smallskip

\noindent \underline{Case 2: $m = 2r$}.\ Here,
\begin{eqnarray*}
k(r,m) - \tau(r,m) 
&=&  \sum_{j=0}^r \binom{m}{j} - 1 - \sum_{j=0}^{r-1} \binom{m-2j-1}{r-j} \\
&=& \sum_{j=1}^r \binom{m}{j} - \sum_{j=1}^r \binom{m-2(r-j)-1}{j} \\
&=& \sum_{j=1}^r \left[\binom{m}{j} - \binom{m-2(r-j)-1}{j}\right],
\end{eqnarray*}
and now we carry on from equality~(a) of Case~1.

\smallskip

\noindent \underline{Case 3: $m \le 2r-1$}.\ This is the most tedious case.
We start with
\begin{eqnarray}
k(r,m) - \tau(r,m) 
&=&  \sum_{j=0}^r \binom{m}{j} - 1 - \sum_{j=0}^{m-r-1} \binom{m-2j-1}{r-j} 
\notag \\
&=& \sum_{j=1}^r \binom{m}{j} - \sum_{j=2r-m+1}^r \binom{m-2(r-j)-1}{j}
\notag \\
&=& \sum_{j=1}^{2r-m} \binom{m}{j} 
     + \sum_{j=2r-m+1}^r \left[\binom{m}{j} - \binom{m-2(r-j)-1}{j}\right] 
\notag \\
&=& \sum_{j=1}^{2r-m} \binom{m}{j} 
     + \sum_{j=2r-m+1}^r \sum_{i=0}^{2(r-j)} \binom{m-1-i}{j-1} \notag \\
&=&  \sum_{j=1}^{2r-m} \binom{m}{j} 
     + \sum_{i=0}^{2(m-r-1)} \sum_{j=2r-m+1}^{r-\lceil{i/2}\rceil} \binom{m-1-i}{j-1}.
\label{case3_eq1}
\end{eqnarray}
Now, for $j \ge 1$, write $\binom{m}{j} = \binom{m}{j} - \binom{0}{j}
= {\sum_{i=0}^{m-1} \binom{m-1-i}{j-1}}$. Hence, 
\begin{equation}
\sum_{j=1}^{2r-m} \binom{m}{j} 
 = \sum_{i=0}^{m-1} \sum_{j=1}^{2r-m} \binom{m-1-i}{j-1}.
\label{case3_eq2}
\end{equation}
Also,
\begin{equation}
\sum_{i=0}^{2(m-r-1)} \sum_{j=2r-m+1}^{r-\lceil{i/2}\rceil} \binom{m-1-i}{j-1}
= \sum_{i=0}^{m-1} \sum_{j=2r-m+1}^{r-\lceil{i/2}\rceil} \binom{m-1-i}{j-1},
\label{case3_eq3}
\end{equation}
as when $i \ge 2(m-r-1)+1$, we have $r-\lceil{i/2}\rceil \le 2r-m$,
so that the inner summation $\sum_{j=2r-m+1}^{r-\lceil{i/2}\rceil}$ is empty.
Plugging (\ref{case3_eq2}) and (\ref{case3_eq3}) into (\ref{case3_eq1}),
we find that 
$$
k(r,m) - \tau(r,m) 
= \sum_{i=0}^{m-1} \sum_{j=1}^{r-\lceil{i/2}\rceil} \binom{m-1-i}{j-1}
= \sum_{i=0}^{m-1} k(r-1-\lceil{i/2}\rceil,m-1-i).
$$
This completes the proof of Proposition~\ref{prop:Srm}. \qed

\section*{Appendix B: Proof of Lemma~\ref{lemma:U_ineq}}

We recast the statement of Lemma~\ref{lemma:U_ineq} into an equivalent
statement about binary representations of integers. 
From (\ref{eq:std_bit_order}) and the notion of points of fall
from \cite{BN00} (see Appendix~A), we see that for $1 \le s \le 2^m$, 
$U_s$ is equal to the number of points of fall of $\text{RM}(r,m)$ 
within the interval $[0,s-1]$. Thus, by Lemma~\ref{BN_lemma},
$U_s$ is equal to the number of integers in $[0,s-1]$ whose $m$-bit 
binary representations have at least $m-r$ $1$s. 

For an integer $j \in [0,2^m-1]$, let $\text{wt}(j)$ denote the Hamming weight
of (i.e., the number of 1s in) the binary representation $\b(j)$. 
For a subset $S \subseteq [0,2^m-1]$, let $w_i(S)$ denote the
number of integers $j \in S$ with $\text{wt}(j) \ge i$. 
We set $w_i(\emptyset) = 0$. Then, 
Lemma~\ref{lemma:U_ineq} is equivalent to the following assertion:
for $i \in [0,\alpha - 2^{m-1}]$ and 
 $j \in [0,\beta-(\alpha-2^{m-1} + 1)]$, we have
\begin{equation}
w_{m-r}([\alpha-i, \alpha-1]) 
 + w_{m-r}([\beta-j,\beta-1]) 
 \ge w_{m-r}([1,i+j]).
\label{appB_eq1}
\end{equation}
Since Lemma~\ref{lemma:U_ineq} needs to be shown for any $\text{RM}(r,m)$
with $0 \le r \le m-1$, we see that (\ref{appB_eq1}) must be shown
for any $m-r \in \{1,2,\ldots,m\}$. With this in mind, we define
for $S \subseteq [0,2^m-1]$,
$$
\w^{(m)}(S) = [w_1(S)\ w_2(S)\ \cdots\ w_m(S)].
$$
As usual, we will drop the superscript $(m)$ when it can be gleaned 
unambiguously from the context. 

\begin{prop}
For $m \ge 2$ and $0 \le i,j \le \alpha^{(m)} - 2^{m-1}$,
we have
\begin{equation}
\w^{(m)}([\alpha^{(m)}-i, \alpha^{(m)}-1]) 
 + \w^{(m)}([\beta^{(m)}-j,\beta^{(m)}-1]) 
 \ge \w^{(m)}([1,i+j]),
\label{eq:w_ineq}
\end{equation}
with the inequality above holding componentwise.
\label{appB_prop}
\end{prop}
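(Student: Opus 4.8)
The plan is to prove Proposition~\ref{appB_prop} one component at a time: fixing a threshold $t \in [1,m]$, I will establish the $t$-th entry of (\ref{eq:w_ineq}), namely
$$
w_t([\alpha^{(m)}-i,\alpha^{(m)}-1]) + w_t([\beta^{(m)}-j,\beta^{(m)}-1]) \ge w_t([1,i+j]),
$$
where, as set up in Appendix~B, $w_t(S)$ counts the integers in $S$ whose binary representation has at least $t$ ones. Since the two intervals on the left together have exactly $i+j$ elements, the size of the interval on the right, this is the assertion that the multiset of Hamming weights on the left majorizes that on the right; equivalently, that there is a weight-nondecreasing bijection between them. It is convenient to adopt the convention $w_0(S)=|S|$, so that shifts in the threshold can be handled uniformly.

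The engine will be induction on $m$, driven by two facts. First, reading off the binary patterns (\ref{eq:binary_alpha}) and (\ref{eq:binary_beta}) gives the recursions $\alpha^{(m)} = 2\,\alpha^{(m-1)} + \epsilon_m$ and $\beta^{(m)} = 2\,\beta^{(m-1)} + \epsilon'_m$, where $(\epsilon_m,\epsilon'_m)=(1,0)$ for $m$ odd and $(0,1)$ for $m$ even; the index bound rescales compatibly, since $\alpha^{(m)}-2^{m-1} = 2(\alpha^{(m-1)}-2^{m-2}) + \epsilon_m$. Second, splitting any integer interval $J$ according to the least significant bit yields
$$
w_t(J) = w_t(J^{(0)}) + w_{t-1}(J^{(1)}),
$$
where $J^{(0)}=\{x:2x\in J\}$ and $J^{(1)}=\{x:2x+1\in J\}$ are again integer intervals, and the identities $\text{wt}(2x)=\text{wt}(x)$, $\text{wt}(2x+1)=\text{wt}(x)+1$ produce the threshold shift on the odd part. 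The base case $m=2$ is immediate, as the index range collapses to $i=j=0$.

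For the inductive step I would treat $m$ odd, the case $m$ even being symmetric with the roles of $\alpha$ and $\beta$ interchanged. Here $\alpha^{(m)}$ is odd and $\beta^{(m)}$ is even, so I would compute the six halved intervals $A^{(0)},A^{(1)},B^{(0)},B^{(1)},C^{(0)},C^{(1)}$ explicitly in terms of $\alpha^{(m-1)},\beta^{(m-1)}$ and of $\lfloor i/2\rfloor,\lceil i/2\rceil,\lfloor j/2\rfloor,\lceil j/2\rceil$. Summing the splitting identity over the three intervals groups the contributions into an ``even part'' evaluated at threshold $t$ and an ``odd part'' evaluated at threshold $t-1$, each of which is---up to the corrections described below---an instance of the proposition at level $m-1$ with halved indices, to which the induction hypothesis applies; the target then follows by addition.

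The main obstacle, and the source of the delicacy, is that the halved intervals do not coincide with the canonical level-$(m-1)$ intervals. Two mismatches appear. First, the even part of the $\alpha$-interval runs all the way up to $\alpha^{(m-1)}$ rather than stopping at $\alpha^{(m-1)}-1$, so it carries one extra element, the endpoint $\alpha^{(m-1)}$. Second, for certain parities of $i$ and $j$ the two halved index sums differ by one from $\lfloor(i+j)/2\rfloor$ and $\lceil(i+j)/2\rceil$, so that one of the right-hand initial intervals $[1,\cdot]$ is one element longer than the induction hypothesis controls, while at the extreme index $j=\alpha^{(m)}-2^{m-1}$ a halved index overshoots the level-$(m-1)$ range by one. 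I expect the resolution to rely \emph{not} on the endpoint having large weight (it need not), but on combining the even-part inequality at threshold $t$ with the odd-part inequality at threshold $t-1$: any one-element deficit at threshold $t$ in the even part is absorbed by the same element counted at the lower threshold $t-1$ in the odd part, since $[\text{wt}(M)\ge t-1]\ge[\text{wt}(M)\ge t]$, while the endpoint $\alpha^{(m-1)}$, of weight at least $1$, supplies the small slack needed at $t=1$ where the integer $0$ enters the odd part. The genuine content of the proof is this bookkeeping---a short case analysis on the parities of $i$ and $j$ verifying that every surplus and deficit cancels, together with a separate check of the lone extreme index where the halved range overshoots.
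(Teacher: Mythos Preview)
Your inductive framework is the same as the paper's: both argue by induction on $m$, exploiting $\text{wt}(2x)=\text{wt}(x)$, $\text{wt}(2x+1)=\text{wt}(x)+1$ together with the recursions $\alpha^{(m)}=2\alpha^{(m-1)}+\epsilon_m$, $\beta^{(m)}=2\beta^{(m-1)}+\epsilon_m'$. The paper, however, runs the induction \emph{upward}: writing $P^{(m)}(i,j)$ for (\ref{eq:w_ineq}), it first proves directly (Lemma~\ref{lem:1}) that $P^{(m)}(i,j)\Rightarrow P^{(m+1)}(2i+\epsilon_{m+1},\,2j+\epsilon_{m+1}')$, and then fills in the remaining three parity patterns of $(i,j)$ at level $m+1$ via a \emph{same-level interpolation} lemma (Lemma~\ref{lem:2}): when $\alpha^{(m)}-i$ is even, $P^{(m)}(i,j)\wedge P^{(m)}(i+2,j)\Rightarrow P^{(m)}(i+1,j)$, and symmetrically for $j$. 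In particular, one of the four parity patterns (for $m$ odd this is $i$ even, $j$ odd; see Lemma~\ref{lem:3}(d)) requires combining \emph{four} instances of the level-$(m-1)$ statement.

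Your proposed resolution of the mismatches has a genuine gap exactly at this parity. With $m$ odd, $i$ even, $j$ odd, carrying out your halving gives (in your notation) the even-part deficit element $M_0=(i+j-1)/2$ at threshold $t$ and the odd-part surplus element $M_1=(i+j+1)/2$ at threshold $t-1$; these are \emph{not} the same integer, so the inequality $[\text{wt}(M)\ge t-1]\ge[\text{wt}(M)\ge t]$ does not apply. Nor does the alternative of comparing $A^{(0)}$ to the longer canonical interval (swapping the extra endpoint $\alpha^{(m-1)}$ for the lost element $\alpha^{(m-1)}-i/2$) save the day in general. Concretely, take $m=9$, $i=22$, $j=41$, $t=5$: then $M_0=31$ has weight $5\ge t$, $M_1=32$ has weight $1<t-1$, $\text{wt}(\alpha^{(8)})=\text{wt}(170)=4<t$, and $\text{wt}(\alpha^{(8)}-i/2)=\text{wt}(159)=6\ge t$; both natural applications of the level-$8$ hypothesis leave a net deficit of $1$ at threshold $5$, even though (\ref{eq:w_ineq}) itself holds. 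What is missing is precisely the interpolation device of Lemma~\ref{lem:2}, which in effect lets you invoke four level-$(m-1)$ instances rather than two in this parity case.
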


Observe that this proposition is slightly stronger than 
Lemma~\ref{lemma:U_ineq}, since the latter only requires 
$0 \le j \le \beta^{(m)}-(\alpha^{(m)}-2^{m-1} + 1)$.
It is easy to verify that $\beta^{(m)}-(\alpha^{(m)}-2^{m-1} + 1) 
\le \alpha^{(m)} - 2^{m-1}$. The remainder of this appendix
is devoted to a proof of Proposition~\ref{appB_prop}. The proof 
is by induction on $m$, which is why we have taken care to
include the superscripts on $\alpha$ and $\beta$ in the statement
of the proposition. The main ingredients in the inductive proof
are the simple facts that for a non-negative integer $j$, 
$\text{wt}(2j) = \text{wt}(j)$ and $\text{wt}(2j+1) = \text{wt}(j) + 1$.
The rest is merely careful bookkeeping.

Let $P^{(m)}(i,j)$ denote the inequality in (\ref{eq:w_ineq}).
The induction argument is built upon certain implications among 
the $P^{(m)}(i,j)$, as stated in the series of lemmas below.
We introduce here some notation that we will use in the proofs
of these lemmas. 
For a set of integers $S$, we write $2S$ and $2S+1$ to mean 
the sets $\{2j: j \in S\}$ and $\{2j+1:j \in S\}$, respectively.
By $\1^{(m)}_{[a,b]}$, with $1 \le a \le b \le m$, we mean 
the vector $[z_1\ z_2\ \ldots\ z_m]$, with $z_i = 1$ for $a \le i \le b$,
and $z_i = 0$ otherwise. Again, we will drop the superscript $(m)$ 
when there is no ambiguity. 

\begin{lemma}
For even $m$, $P^{(m)}(i,j)$ implies $P^{(m+1)}(2i+1,2j)$. 
For odd $m$, $P^{(m)}(i,j)$ implies $P^{(m+1)}(2i,2j+1)$. 
\label{lem:1}
\end{lemma}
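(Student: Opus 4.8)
The plan is to push both implications through the binary-weight bookkeeping underlying $\w^{(m)}$, using only the two doubling identities $\text{wt}(2t)=\text{wt}(t)$ and $\text{wt}(2t+1)=\text{wt}(t)+1$ recorded just above. The first step is to record how the two reference points double. Reading off the binary patterns (\ref{eq:binary_alpha}) and (\ref{eq:binary_beta}), one checks that for even $m$ we have $\alpha^{(m+1)}=2\alpha^{(m)}+1$ and $\beta^{(m+1)}=2\beta^{(m)}$, whereas for odd $m$ we have $\alpha^{(m+1)}=2\alpha^{(m)}$ and $\beta^{(m+1)}=2\beta^{(m)}+1$. These recursions are exactly what force the two index shifts in the statement: for instance, in the even case $[\alpha^{(m+1)}-(2i+1),\alpha^{(m+1)}-1]=[2(\alpha^{(m)}-i),2\alpha^{(m)}]$ and $[\beta^{(m+1)}-2j,\beta^{(m+1)}-1]=[2(\beta^{(m)}-j),2\beta^{(m)}-1]$, while the target interval is $[1,2(i+j)+1]$ in both parities.

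Next I would split each of the three intervals appearing in $P^{(m+1)}$ into its even elements $2t$ and its odd elements $2t+1$, and apply the transcriptions $w_\ell^{(m+1)}(2S)=w_\ell^{(m)}(S)$ and $w_\ell^{(m+1)}(2S+1)=w_{\ell-1}^{(m)}(S)$, with the convention $w_0^{(m)}(S)=|S|$. Every interval then resolves into an unshifted copy of $\w^{(m)}$ of a level-$m$ interval together with a copy shifted down by one coordinate (so that $w_{\ell-1}^{(m)}$ lands in coordinate $\ell$, and $|S|$ lands in coordinate $1$). The level-$m$ intervals that arise are precisely $A=[\alpha^{(m)}-i,\alpha^{(m)}-1]$, $B=[\beta^{(m)}-j,\beta^{(m)}-1]$ and $C=[1,i+j]$ from $P^{(m)}(i,j)$; the odd part of the target contributes $[0,i+j]$, which has the same $\w^{(m)}$ as $C$ because $0$ is weightless. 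In addition, exactly one interval acquires a single extra even boundary point, namely $2\alpha^{(m)}$ in the even case (inside the $\alpha$-interval) and $2\beta^{(m)}$ in the odd case (inside the $\beta$-interval), whose weight equals $\text{wt}(\alpha^{(m)})$, respectively $\text{wt}(\beta^{(m)})$.

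I would then verify the desired componentwise inequality coordinate by coordinate. For every coordinate $\ell\ge2$ the downward shift couples levels $\ell$ and $\ell-1$, so the inequality reduces to applying $P^{(m)}(i,j)$ simultaneously at these two levels (the extra boundary term only helps); these coordinates are therefore immediate. The single interesting coordinate is the top one, where the shifted contributions of $A$ and $B$ supply the constant $i+j$, while the odd part of the target, $\{2t+1:0\le t\le i+j\}$, has cardinality $i+j+1$.

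The main obstacle is precisely this top coordinate: the shift leaves a deficit of exactly one. That deficit is absorbed by the extra boundary point, which contributes to the top coordinate exactly when its weight is at least $1$. This is where the two parities must be separated and the index shifts chosen as they are: for even $m\ge2$ the available boundary point is $\alpha^{(m)}$, with $\text{wt}(\alpha^{(m)})=m/2\ge1$, and for odd $m\ge3$ (the only odd values reached in the induction, whose base case is $m=2$) it is $\beta^{(m)}$, with $\text{wt}(\beta^{(m)})=(m-1)/2\ge1$. With the top coordinate settled, reassembling the coordinatewise inequalities yields $P^{(m+1)}(2i+1,2j)$ in the even case and $P^{(m+1)}(2i,2j+1)$ in the odd case, completing both implications.
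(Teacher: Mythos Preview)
Your argument is correct and is essentially the paper's own proof: both use the recursions $\alpha^{(m+1)}=2\alpha^{(m)}+1,\ \beta^{(m+1)}=2\beta^{(m)}$ (even $m$) and their analogues for odd $m$, split the level-$(m+1)$ intervals into even and odd halves, apply $P^{(m)}(i,j)$ to each half, and cover the residual unit deficit in the first coordinate with the single extra endpoint $2\alpha^{(m)}$ (resp.\ $2\beta^{(m)}$) of weight $m/2\ge1$ (resp.\ $(m-1)/2\ge1$). The only difference is packaging: the paper adds the two half-inequalities globally and then adjusts by $\1_{[1,m/2]}$ versus $\1_{[1,1]}$, whereas you carry out the same computation coordinate by coordinate.
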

\begin{proof}
For even $m$, we have $\alpha^{(m+1)}=2\alpha^{(m)}+1$,
and $\beta^{(m+1)}=2\beta^{(m)}$. Set $S = [\alpha^{(m)}-i,\alpha^{(m)}-1]$
and $T = [\beta^{(m)}-j,\beta^{(m)}-1]$. Now, $P^{(m)}(i,j)$ implies
\begin{eqnarray}
\w^{(m+1)}(2S) + \w^{(m+1)}(2T) &\ge& \w^{(m+1)}(2[1,i+j]) 
 \label{eq:lem1.1}  \\
  \w^{(m+1)}(2S+1)+\w^{(m+1)}(2T+1)
  &\ge& \w^{(m+1)}(2[1,i+j]+1)   \label{eq:lem1.2}
\end{eqnarray}
since $\text{wt}(2j) = \text{wt}(j)$ and $\text{wt}(2j+1) = \text{wt}(j) + 1$ 
for any non-negative integer $j$. Henceforth, all the $\w$'s in this proof are
$\w^{(m+1)}$'s. Combining (\ref{eq:lem1.1}) and (\ref{eq:lem1.2}), we have
$$
\mathbf{w}([2\alpha^{(m)}-2i,2\alpha^{(m)}-1])
 +\mathbf{w}([2\beta^{(m)}-2j,2\beta^{(m)}-1])
\ge
\mathbf{w}([2,2i+2j+1]),$$
which is the same as
\begin{equation}
  \label{eq:lem1.3}
 \mathbf{w}([\alpha^{(m+1)}-2i-1,\alpha^{(m+1)}-2])
 +\mathbf{w}([\beta^{(m+1)}-2j,\beta^{(m+1)}-1])\ge\mathbf{w}([2,2i+2j+1]).
\end{equation}
Now, $\mathbf{w}([1,2i+2j+1])=\mathbf{w}([2,2i+2j+1])+\1^{(m+1)}_{[1,1]}$. 
Also, $\mathbf{w}([\alpha^{(m+1)}-2i-1,\alpha^{(m+1)}-1])
=\mathbf{w}([\alpha^{(m+1)}-2i-1,\alpha^{(m+1)}-2])+\1^{(m+1)}_{[1,m/2]}$, 
since $\text{wt}(\alpha^{(m+1)}-1) 
= \text{wt}(2\alpha^{(m)}) = \text{wt}(\alpha^{(m)}) = m/2$, 
by (\ref{eq:binary_alpha}). Therefore,
\begin{equation}
  \label{eq:lem1.4}
  \w([a^{(m+1)}-2i-1,a^{(m+1)}-1])+\w([b^{(m+1)}-2j,b^{(m+1)}-1])
   \ge \w([1,2i+2j+1]),
\end{equation}
which is $P^{(m+1)}(2i+1,2j)$. 

The proof for odd $m$ is along similar lines.
\end{proof}

\medskip

\begin{lemma}
\begin{itemize}
\item[(a)] When $\alpha^{(m)}-i$ is even, the two inequalities 
$P^{(m)}(i,j)$ and $P^{(m)}(i+2,j)$ together imply $P^{(m)}(i+1,j)$.
\item[(b)] When $\beta^{(m)}-j$ is even, the two inequalities 
$P^{(m)}(i,j)$ and $P^{(m)}(i,j+2)$ together imply $P^{(m)}(i,j+1)$.
\end{itemize}
\label{lem:2}
\end{lemma}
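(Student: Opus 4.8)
The plan is to establish part~(a) in full and then obtain part~(b) by the identical argument with the two left-hand intervals interchanged. The starting point is the additive description of $\w$: for any finite set $S$ of integers, $\w(S) = \sum_{k \in S} \1_{[1,\text{wt}(k)]}$, because the $l$-th coordinate of $\1_{[1,\text{wt}(k)]}$ is $1$ exactly when $\text{wt}(k) \ge l$. Hence prepending a single integer $k$ to the low end of an interval adds the single vector $\1_{[1,\text{wt}(k)]}$ to $\w$. This lets me write all three inequalities $P(i,j)$, $P(i+1,j)$ and $P(i+2,j)$ over a common base.

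Concretely, setting $A = \w([\alpha-i,\alpha-1]) + \w([\beta-j,\beta-1])$ and $B = \w([1,i+j])$, the hypothesis $P(i,j)$ reads $A \ge B$, the target $P(i+1,j)$ reads $A + \1_{[1,\text{wt}(\alpha-i-1)]} \ge B + \1_{[1,\text{wt}(i+j+1)]}$, and the second hypothesis $P(i+2,j)$ reads $A + \1_{[1,\text{wt}(\alpha-i-1)]} + \1_{[1,\text{wt}(\alpha-i-2)]} \ge B + \1_{[1,\text{wt}(i+j+1)]} + \1_{[1,\text{wt}(i+j+2)]}$. Fixing a coordinate $l \in \{1,\ldots,m\}$, writing $A_l$, $B_l$ for the $l$-th coordinates of $A$, $B$, writing $a_l$, $a'_l$ for the $l$-th coordinates of $\1_{[1,\text{wt}(\alpha-i-1)]}$, $\1_{[1,\text{wt}(\alpha-i-2)]}$, and writing $b_l$, $b'_l$ for those of $\1_{[1,\text{wt}(i+j+1)]}$, $\1_{[1,\text{wt}(i+j+2)]}$, the lemma reduces to deducing $A_l + a_l \ge B_l + b_l$ from the two facts $A_l \ge B_l$ and $A_l + a_l + a'_l \ge B_l + b_l + b'_l$, where all quantities are integers and the $a$'s and $b$'s lie in $\{0,1\}$.

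The case analysis on $l$ is then short. If $b_l = 0$, the target follows from $A_l \ge B_l$ together with $a_l \ge 0$; if $b_l = a_l = 1$, it again reduces to $A_l \ge B_l$. The only obstruction is the case $b_l = 1$, $a_l = 0$, where I must produce the strict inequality $A_l \ge B_l + 1$. This is exactly where the parity hypothesis enters: since $\alpha - i$ is even, $\alpha-i-2 = 2t$ is even and $\alpha-i-1 = 2t+1$, so $\text{wt}(\alpha-i-1) = \text{wt}(2t+1) = \text{wt}(2t)+1 = \text{wt}(\alpha-i-2)+1$. Consequently $a_l = 0$ forces $a'_l = 0$, and the second hypothesis collapses to $A_l \ge B_l + 1 + b'_l \ge B_l + 1$, as required. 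For part~(b) the same computation applies verbatim, with $\w([\beta-j,\beta-1])$ now the lengthening interval, with $c_l$, $c'_l$ (the coordinates attached to $\beta-j-1$, $\beta-j-2$) playing the roles of $a_l$, $a'_l$, and with the hypothesis ``$\beta-j$ even'' supplying $\text{wt}(\beta-j-1) = \text{wt}(\beta-j-2)+1$. The one point requiring care is precisely this use of parity to force the companion indicator to vanish in the single hard coordinate; everything else is bookkeeping.
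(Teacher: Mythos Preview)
Your proof is correct and follows essentially the same approach as the paper: both reduce $P^{(m)}(i+1,j)$ to a coordinatewise inequality and use the parity hypothesis to conclude that $\text{wt}(\alpha-i-1)=\text{wt}(\alpha-i-2)+1$, so that the indicator attached to $\alpha-i-2$ vanishes whenever the one attached to $\alpha-i-1$ does, letting $P^{(m)}(i+2,j)$ supply the needed strict inequality in the problematic coordinates. The only difference is cosmetic---the paper splits into the two cases $\text{wt}(\alpha-i-1)\ge\text{wt}(i+j+1)$ and $\text{wt}(\alpha-i-1)<\text{wt}(i+j+1)$ before passing to coordinates, whereas you go directly to a coordinatewise case analysis on the indicator values.
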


\begin{proof}
We only prove (a), as the proof of (b) is completely analogous.
In this proof, all omitted superscripts are to be taken to be $(m)$. 

Let $x=\text{wt}(\alpha-i-1)$ and $y=\text{wt}(i+j+1)$. We have
$\w([\alpha-i-1,\alpha-1]) = \w([\alpha-i,\alpha-1]) + \1_{[1,x]}$,
and $\w([1,i+j+1]) = \w([1,i+j]) + \1_{[1,y]}$.
We want to show $P^{(m)}(i+1,j)$:
\begin{equation}
\label{eq:lem2.1}
 \w([\alpha-i,\alpha-1]) + \1_{[1,x]} + \w([\beta-j,\beta-1])
 \ge \w([1,i+j]) + \1_{[1,y]}.
\end{equation}

If $x\ge y$, then $P^{(m)}(i,j)$ clearly implies (\ref{eq:lem2.1}).
So, suppose $x<y$. Then, (\ref{eq:lem2.1}) becomes
$$
 \w([\alpha-i,\alpha-1]) + \w([\beta-j,\beta-1])
 \ge \w([1,i+j]) + \1_{[x+1,y]},
$$
or equivalently,
\begin{equation}
\label{eq:lem2.2}
 w_l([\alpha-i,\alpha-1]) + w_l([\beta-j,\beta-1])
 \ge 
\begin{cases}
w_l([1,i+j]) + 1 & \text{ if } x+1 \le l \le y \\
w_l([1,i+j]) & \text{ otherwise}.
\end{cases}
\end{equation}

Let $x'=\text{wt}(\alpha-i-2)$ and $y'=\text{wt}(i+j+2)$. 
Since $\alpha-i$ is even, we see that $x'+1=x$ or $x'<x$. Now, we have
\begin{eqnarray}
\label{eq:lem2.3}
 \w([\alpha-i-2,\alpha-1])&=&\w([\alpha-i,\alpha-1])+\1_{[1,x]}+\1_{[1,x']} \\
\label{eq:lem2.4} 
 \w([1,i+j+2])&=&\w([1,i+j])+\1_{[1,y]}+\1_{[1,y']}
\end{eqnarray}
Thus, $P^{(m)}(i+2,j)$ is equivalent to
\begin{equation}
\label{eq:lem2.5}
 \w([\alpha-i,\alpha-1])+\1_{[1,x']} + \w([\beta-j,\beta-1])
 \ge \w([1,i+j]) + \1_{[x+1,y]}+\1_{[1,y']}.
\end{equation}
Using the fact that $x'<x$, (\ref{eq:lem2.5}) implies that 
for $x+1 \le l \le y$, 
$$
 w_l([\alpha-i,\alpha-1]) + w_l([\beta-j,\beta-1]) \ge w_l([1,i+j]) + 1.
$$
Since $P^{(m)}(i,j)$ clearly implies the ``otherwise'' part of 
(\ref{eq:lem2.2}), we have shown that $P^{(m)}(i,j)$ and $P^{(m)}(i+2,j)$
together imply (\ref{eq:lem2.2}), i.e., $P^{(m)}(i+1,j)$.
\end{proof}

\medskip

\begin{lemma}
For even $m$, the following implications hold:
\begin{itemize}
\item[(a)] $P^{(m)}(i,j) \ \Longrightarrow \ P^{(m+1)}(2i+1,2j)$;
\item[(b)] $P^{(m)}(i-1,j) \wedge P^{(m)}(i,j) \
\Longrightarrow \ P^{(m+1)}(2i,2j)$;
\item[(c)] $P^{(m)}(i,j) \wedge P^{(m)}(i,j+1) \
\Longrightarrow \ P^{(m+1)}(2i+1,2j+1)$;
\item[(d)] $P^{(m)}(i-1,j) \wedge P^{(m)}(i,j) 
\wedge P^{(m)}(i-1,j+1) \wedge P^{(m)}(i,j+1) 
\ \Longrightarrow \ P^{(m+1)}(2i,2j+1)$.
\end{itemize}
\label{lem:3}
\end{lemma}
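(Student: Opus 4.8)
The plan is to build each of the four implications out of the two tools already established: Lemma~\ref{lem:1}, which lifts a single inequality from level $m$ to level $m+1$ but only at an argument pair of the form $(2i+1,2j)$ (odd first coordinate, even second coordinate); and Lemma~\ref{lem:2}, which interpolates between two level-$(m+1)$ inequalities whose arguments differ by exactly $2$ in one coordinate, filling in the midpoint. The organizing observation is that Lemma~\ref{lem:1} alone reaches only the $(\mathrm{odd},\mathrm{even})$ parity pattern, so the remaining three target patterns in (a)--(d) must be produced by using Lemma~\ref{lem:2} to supply the missing (even) argument in one or both coordinates.

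Before starting I would record the parity facts that make the hypotheses of Lemma~\ref{lem:2} automatic. Since $m$ is even, $m+1$ is odd, so by (\ref{eq:binary_alpha}) and (\ref{eq:binary_beta}) (equivalently, from the recursions $\alpha^{(m+1)}=2\alpha^{(m)}+1$ and $\beta^{(m+1)}=2\beta^{(m)}$ used in the proof of Lemma~\ref{lem:1}) the value $\alpha^{(m+1)}$ is odd and $\beta^{(m+1)}$ is even. Consequently $\alpha^{(m+1)}-(2i-1)$ is even, so Lemma~\ref{lem:2}(a) may always be invoked to interpolate in the first coordinate at an even target $2i$; and $\beta^{(m+1)}-2j$ is even, so Lemma~\ref{lem:2}(b) may always be invoked to interpolate in the second coordinate at an odd target $2j+1$.

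With these in hand the first three parts are short. Part~(a) is literally the even-$m$ case of Lemma~\ref{lem:1}, so there is nothing to do. For part~(b), applying Lemma~\ref{lem:1} to the hypotheses $P^{(m)}(i-1,j)$ and $P^{(m)}(i,j)$ yields $P^{(m+1)}(2i-1,2j)$ and $P^{(m+1)}(2i+1,2j)$, which differ by $2$ in the first coordinate; since $\alpha^{(m+1)}-(2i-1)$ is even, Lemma~\ref{lem:2}(a) produces $P^{(m+1)}(2i,2j)$. Part~(c) is the mirror image: Lemma~\ref{lem:1} applied to $P^{(m)}(i,j)$ and $P^{(m)}(i,j+1)$ gives $P^{(m+1)}(2i+1,2j)$ and $P^{(m+1)}(2i+1,2j+2)$, and since $\beta^{(m+1)}-2j$ is even, Lemma~\ref{lem:2}(b) interpolates in the second coordinate to give $P^{(m+1)}(2i+1,2j+1)$.

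Part~(d) is the crux, and it is the only place where all four hypotheses are consumed, precisely because its target $(2i,2j+1)$ must be reached by two successive interpolations rather than one. The cleanest route is to apply part~(b) twice: once to $P^{(m)}(i-1,j)\wedge P^{(m)}(i,j)$ to obtain $P^{(m+1)}(2i,2j)$, and once to $P^{(m)}(i-1,j+1)\wedge P^{(m)}(i,j+1)$ to obtain $P^{(m+1)}(2i,2j+2)$; these differ by $2$ in the second coordinate, so a final application of Lemma~\ref{lem:2}(b), whose hypothesis $\beta^{(m+1)}-2j$ even again holds, yields $P^{(m+1)}(2i,2j+1)$. I expect the only genuine obstacle to be bookkeeping rather than mathematical content: at each interpolation one must check that the invoked evenness hypothesis of Lemma~\ref{lem:2} is read off at the correct (smaller) argument, and that the intermediate argument pairs still lie in the admissible ranges demanded by Proposition~\ref{appB_prop} at level $m+1$. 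Both reduce to the parity computations recorded in the second paragraph.
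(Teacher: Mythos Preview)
Your proof is correct and follows essentially the same route as the paper's: part~(a) is Lemma~\ref{lem:1} directly, parts~(b) and~(c) lift via Lemma~\ref{lem:1} and then interpolate with Lemma~\ref{lem:2}(a) and~(b) respectively, and part~(d) applies part~(b) twice followed by Lemma~\ref{lem:2}(b). You are somewhat more explicit than the paper about verifying the parity hypotheses of Lemma~\ref{lem:2}, which is fine.
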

\begin{proof}
(a) follows directly from Lemma~\ref{lem:1}. 

(b): If $P^{(m)}(i-1,j)$ and $P^{(m)}(i,j)$ are true, 
then by Lemma~\ref{lem:1}, we have $P^{(m+1)}(2i-1,2j)$ 
and $P^{(m+1)}(2i+1,2j)$ being true. Since $m+1$ is odd,
$\alpha^{(m+1)}$ is odd (see (\ref{eq:binary_alpha})). 
It now follows from Lemma~\ref{lem:2}(a) that $P^{(m+1)}(2i,2j)$ holds.

(c): This follows by an argument similar to part~(b), 
except that Lemma~\ref{lem:2}(b) is applied.

(d): By part~(b), $P^{(m+1)}(2i,2j)$ and  $P^{(m+1)}(2i,2j+2)$ hold.
Therefore, by Lemma~\ref{lem:2}(b),  $P^{(m+1)}(2i,2j+1)$ holds.
\end{proof}

\medskip

Arguments similar to those used in the above proof show the next result.

\begin{lemma}
For odd $m$, the following implications hold:
\begin{itemize}
\item[(a)] $P^{(m)}(i,j) \ \Longrightarrow \ P^{(m+1)}(2i,2j+1)$;
\item[(b)] $P^{(m)}(i,j-1) \wedge P^{(m)}(i,j) \
\Longrightarrow \ P^{(m+1)}(2i,2j)$;
\item[(c)] $P^{(m)}(i,j) \wedge P^{(m)}(i+1,j) \
\Longrightarrow \ P^{(m+1)}(2i+1,2j+1)$;
\item[(d)] $P^{(m)}(i,j-1) \wedge P^{(m)}(i,j) 
\wedge P^{(m)}(i+1,j-1) \wedge P^{(m)}(i+1,j) 
\ \Longrightarrow \ P^{(m+1)}(2i+1,2j)$.
\end{itemize}
\label{lem:4}
\end{lemma}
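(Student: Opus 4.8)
The plan is to follow the template of the proof of Lemma~\ref{lem:3}, interchanging the roles of the $\alpha$- and $\beta$-coordinates (equivalently, of the arguments $i$ and $j$). This interchange reflects the fact that, by (\ref{eq:binary_alpha}) and (\ref{eq:binary_beta}), for odd $m$ the integer $\alpha^{(m)}$ is odd while $\beta^{(m)}$ is even, which is the exact opposite of the even-$m$ situation exploited in Lemma~\ref{lem:3}. Throughout, the only two ingredients are the odd-$m$ halving implication of Lemma~\ref{lem:1}, namely $P^{(m)}(i,j) \Rightarrow P^{(m+1)}(2i,2j+1)$, together with the ``interpolation'' implications of Lemma~\ref{lem:2}.

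First I would dispose of part~(a), which is nothing but the odd-$m$ statement of Lemma~\ref{lem:1}. For part~(b), I would apply Lemma~\ref{lem:1} to each of the hypotheses $P^{(m)}(i,j-1)$ and $P^{(m)}(i,j)$ to obtain $P^{(m+1)}(2i,2j-1)$ and $P^{(m+1)}(2i,2j+1)$; these differ only in their second argument, by $2$, so I can invoke Lemma~\ref{lem:2}(b) at level $m+1$ to interpolate $P^{(m+1)}(2i,2j)$. The parity hypothesis of Lemma~\ref{lem:2}(b) requires $\beta^{(m+1)}-(2j-1)$ to be even; since $m+1$ is even, (\ref{eq:binary_beta}) shows $\beta^{(m+1)}$ is odd, so this difference is indeed even. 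Part~(c) is the symmetric statement: Lemma~\ref{lem:1} applied to $P^{(m)}(i,j)$ and $P^{(m)}(i+1,j)$ yields $P^{(m+1)}(2i,2j+1)$ and $P^{(m+1)}(2i+2,2j+1)$, and Lemma~\ref{lem:2}(a) then interpolates $P^{(m+1)}(2i+1,2j+1)$, the relevant condition $\alpha^{(m+1)}-2i$ being even because $\alpha^{(m+1)}$ is even by (\ref{eq:binary_alpha}) for even $m+1$.

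Finally, part~(d) bootstraps on part~(b): applying the just-established implication (b) to the pairs $\{P^{(m)}(i,j-1),P^{(m)}(i,j)\}$ and $\{P^{(m)}(i+1,j-1),P^{(m)}(i+1,j)\}$ produces $P^{(m+1)}(2i,2j)$ and $P^{(m+1)}(2i+2,2j)$, after which a single application of Lemma~\ref{lem:2}(a) (again with $\alpha^{(m+1)}-2i$ even) delivers the target $P^{(m+1)}(2i+1,2j)$.

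I do not expect any genuine obstacle, since the argument is a mechanical transcription of Lemma~\ref{lem:3} under the $\alpha \leftrightarrow \beta$ swap. The one point demanding care is the bookkeeping of parities at level $m+1$: one must make sure that in each interpolation step the correct part of Lemma~\ref{lem:2} is invoked and that its evenness hypothesis is actually satisfied. This is precisely where the passage from odd $m$ to even $m+1$ enters, and checking it amounts to reading off the leading bits of $\alpha^{(m+1)}$ and $\beta^{(m+1)}$ from (\ref{eq:binary_alpha}) and (\ref{eq:binary_beta}).
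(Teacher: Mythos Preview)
Your proposal is correct and is precisely the argument the paper has in mind: the paper's own proof of Lemma~\ref{lem:4} consists solely of the remark that ``arguments similar to those used in the above proof'' (i.e., of Lemma~\ref{lem:3}) apply, and your write-up supplies exactly that symmetric transcription, with the parity checks on $\alpha^{(m+1)}$ and $\beta^{(m+1)}$ handled correctly.
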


We are now in a position to prove Proposition~\ref{appB_prop}.

\smallskip

\noindent\emph{Proof of Proposition~\ref{appB_prop}\/}.\ \
Set $\ell^{(m)} = \alpha^{(m)} - 2^{m-1}$. We wish to show that for
$m \ge 2$, $P^{(m)}(i,j)$ holds for $0 \le i,j \le \ell^{(m)}$. 
It is easy to verify this directly for $m=2$ and $m=3$, 
so we start the induction by assuming that for some odd $m \ge 3$, 
$P^{(m)}(i,j)$ holds for $0 \le i,j \le \ell^{(m)}$. 

For odd $m$, the implications in Lemma~\ref{lem:4} 
are enough to show that $P^{(m+1)}(i,j)$ holds for $1 \le i \le 2\ell^{(m)}$
and $1 \le j \le 2\ell^{(m)}+1$. Note also that for odd $m$, 
we have $\ell^{(m+1)} = 2\ell^{(m)}$, as can be verified from (\ref{eq:alpha}).
Since $P^{(m+1)}(0,0)$, $P^{(m+1)}(0,1)$ and $P^{(m+1)}(1,0)$ 
trivially hold, we have that $P^{(m+1)}(i,j)$ holds for 
$0 \le i \le \ell^{(m+1)}$ and $0 \le j \le \ell^{(m+1)}+1$.

Now, $m+1$ is even, and we have shown above that $P^{(m+1)}(i,j)$ is true 
for $0 \le i \le \ell^{(m+1)}$ and $0 \le j \le \ell^{(m+1)}+1$. 
The implications in Lemma~\ref{lem:3} are then sufficient to show that
$P^{(m+2)}(i,j)$ holds for $1 \le i,j \le 2\ell^{(m+1)} + 1$. Again,
$P^{(m+2)}(0,0)$, $P^{(m+2)}(0,1)$ and $P^{(m+2)}(1,0)$ can be seen to hold
trivially, so $P^{(m+2)}(i,j)$ in fact holds for 
$0 \le i,j \le 2\ell^{(m+1)} + 1$. This completes the induction step,
since for even $m+1$, it follows from (\ref{eq:alpha}) that
$\ell^{(m+2)} = 2\ell^{(m+1)}+1$. \qed

\medskip

As observed earlier, Proposition~\ref{appB_prop} proves 
Lemma~\ref{lemma:U_ineq}.

\section*{Appendix C: Computing $U_\alpha$ and $U_\beta$}

To derive the expressions in (\ref{Us_eq1})--(\ref{Us_eq3}),
we make use of (\ref{eq:U_s}) and a result of Wei \cite{Wei91} that 
explicitly determines the generalized Hamming weight hierarchy of 
$\text{RM}(r,m)$. Any non-negative integer $u < k(r,m)$ can be uniquely
expressed as a sum 
\begin{equation}
u = \sum_{i=1}^{\ell} k(r_i,m_i),
\label{eq:canonical}
\end{equation}
where $r > r_1 \ge r_2 \ge \ldots \ge r_\ell \ge 0$, 
$m > m_1 \ge m_2 \ge \ldots \ge m_\ell \ge 0$, 
and for all $i$, $m_i - r_i = m-r+1-i$ \cite[Lemma~2]{Wei91}.
The above representation is called the 
\emph{$(r,m)$-canonical representation} of $u$.

\begin{theorem}[\cite{Wei91}, Corollary~6]
For $0 \le u < k(r,m)$, given the unique $(r,m)$-canonical representation
of $u$ as in (\ref{eq:canonical}), we have 
$d_u(\text{RM}(r,m))=\sum_{i=1}^\ell 2^{m_i}$.
\label{thm:wei}
\end{theorem}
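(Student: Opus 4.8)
The plan is to pass to the polynomial/evaluation model of $\text{RM}(r,m)$ and recast the generalized Hamming weight as a statement about common zero sets. Identifying the $n=2^m$ coordinates with the points $\b(0),\ldots,\b(2^m-1)$ of $\F_2^m$, every subcode $\cD \sqsubseteq \text{RM}(r,m)$ of dimension $u$ corresponds to a $u$-dimensional space $V \subseteq P^m_r$ of Boolean polynomials, and its support $\chi(\cD)$ is exactly the set of points at which some $f \in V$ is nonzero. Writing $Z(V) = \{x \in \F_2^m : f(x) = 0 \text{ for all } f \in V\}$ for the common zero set, we have $|\chi(\cD)| = 2^m - |Z(V)|$, so the theorem is equivalent to showing that the maximum of $|Z(V)|$ over $u$-dimensional $V \subseteq P^m_r$ equals $2^m - \sum_{i=1}^\ell 2^{m_i}$. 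I would establish this by matching an achievability construction (upper bound on $d_u$) against a converse (lower bound on $d_u$).

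For achievability I would exhibit a single $u$-dimensional subcode of support exactly $\sum_{i=1}^\ell 2^{m_i}$, using the recursive description $\text{RM}(r,m)=\{(a \mid a+b) : a \in \text{RM}(r,m-1),\ b \in \text{RM}(r-1,m-1)\}$ obtained by splitting $\F_2^m$ according to the last coordinate, so that $f = a + x_m b$ with $a \in P^{m-1}_r$ and $b \in P^{m-1}_{r-1}$. Inducting on $m$, I would combine a support-optimal subcode of $\text{RM}(r,m-1)$ (contributing on both halves through $b=0$) with a support-optimal subcode of $\text{RM}(r-1,m-1)$ nested inside it (contributing on one half through $a=0$), arranged so that the right-half support does not exceed the left-half support. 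The point to verify is that the canonical representation (\ref{eq:canonical}) of $u$ with respect to $(r,m)$ splits into canonical data for $(r,m-1)$ and $(r-1,m-1)$ whose piece-sums $\sum 2^{m_i}$ recombine to $\sum_{i=1}^\ell 2^{m_i}$; the margin condition $m_i - r_i = m-r+1-i$ is precisely what keeps this bookkeeping consistent across the recursion. I note that the naive alternative of confining each canonical piece $k(r_i,m_i)$ to its own \emph{disjoint} flat via an indicator fails the degree budget for $i \ge 2$ (for instance in $\text{RM}(1,3)$ the indicator of a $1$-flat has degree $2 > r$), so the flats must instead be arranged in a flag with overlapping supports — which the induction organizes automatically.

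The main obstacle is the converse: showing that \emph{every} $u$-dimensional subcode $\cD$ satisfies $|\chi(\cD)| \ge \sum_{i=1}^\ell 2^{m_i}$, equivalently $|Z(V)| \le 2^m - \sum_{i=1}^\ell 2^{m_i}$. I would again induct on $m$ via the same hyperplane split. Given $V \subseteq P^m_r$ of dimension $u$, the projection $f = a + x_m b \mapsto b$ maps $V$ onto a subspace $B \subseteq \text{RM}(r-1,m-1)$ with kernel the pure-$a$ subcode $A_0 \subseteq \text{RM}(r,m-1)$, while the $a$-parts span a subspace $A \supseteq A_0$ of $\text{RM}(r,m-1)$. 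Since $f|_L = a$ on the half $x_m=0$ and $f|_R = a+b$ on the half $x_m=1$, the support $|\chi(\cD)|$ is bounded below by a sum of two generalized-Hamming-weight contributions — one from $\text{RM}(r,m-1)$ and one from $\text{RM}(r-1,m-1)$ — evaluated at dimensions determined by $\dim A$, $\dim A_0$ and $\dim B$. Substituting the inductive formula for these two smaller codes reduces the claim to a purely combinatorial inequality: that over all admissible dimension splits, the minimum of the resulting sum of $2^{m_i}$-type terms equals $\sum_{i=1}^\ell 2^{m_i}$. The hardest step is controlling the coupling on the right half (where $a$ and $b$ enter as $a+b$) so that the two halves cannot both have small support, and then checking that the extremal split reproduces exactly the canonical representation of $u$; this is where the strict monotonicity $d_1 < d_2 < \cdots < d_k$ of the weight hierarchy and the decrement behaviour of canonical representations under $u \mapsto u-1$ must be invoked. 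A self-contained alternative to this inductive converse is the monomial footprint method — ordering reduced monomials and bounding the number of common zeros of $V$ by an escalier (Gr\"obner-type) count — which avoids the recursion at the price of more algebraic machinery. With achievability and converse in hand, the two bounds coincide and $d_u(\text{RM}(r,m)) = \sum_{i=1}^\ell 2^{m_i}$ follows.
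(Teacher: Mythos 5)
First, a framing point: the paper does not prove Theorem~\ref{thm:wei} at all --- it is imported verbatim as \cite{Wei91}, Corollary~6, and used as a black box in Appendix~C --- so there is no in-paper argument to compare yours against; your proposal has to stand on its own. Its achievability half essentially does: the zero-set reformulation is correct, the nested-flag construction along the $|u|u+v|$ split can be made rigorous (it amounts to the fact, recorded in the paper as (\ref{eq:std_bit_order}) from \cite{KTFL93}, that initial segments in the standard bit order are support-optimal, and an initial segment of length $\sum_i 2^{m_i}$ is exactly a flag-like union of subcubes), and your observation that disjoint flats fail the degree budget for $i \ge 2$ is accurate and shows you understand why the overlap structure is forced.

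The converse, however, contains a genuine gap, and it is not the deferred technicality you present it as. The two-term lower bound you propose --- support at least a $d_{\dim A}(\text{RM}(r,m-1))$-type contribution from the left half plus a $d_{\dim B}(\text{RM}(r-1,m-1))$-type contribution from the right half --- is insufficient to close the induction, because the minimum over admissible dimension splits of that sum can fall strictly below $d_u(\text{RM}(r,m))$. Concretely, take $\text{RM}(1,2)$ (the length-4 even-weight code) with $u=2$, so $d_2 = 2^1 + 2^0 = 3$. For a $2$-dimensional subcode whose restriction to the half $x_2=0$ is injective, e.g.\ $\{(a \,|\, a): a \in \text{RM}(1,1)\}$, the kernel subspace $B_0$ is trivial and your bound yields only $d_2(\text{RM}(1,1)) + 0 = 2 < 3$; replacing the kernel term by the right-half image $R$ (which satisfies $\dim A + \dim R \ge u$) fares no better, since $d_1(\text{RM}(1,1)) + d_1(\text{RM}(1,1)) = 2 < 3$. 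So the step where you say the claim ``reduces to a purely combinatorial inequality'' reduces it, as stated, to a \emph{false} inequality: the coupling between the halves (that $a$ and $a+b$ cannot simultaneously have small support) is not bookkeeping to be ``controlled'' afterwards --- it is the entire content of the converse, and the strict monotonicity of the hierarchy that you invoke does not supply it. Your fallback, the footprint/Gr\"obner escalier count (essentially the Heijnen--Pellikaan route for generalized Reed--Muller codes), is a viable complete method, but you only name it. As it stands, the proposal proves the upper bound on $d_u$ and leaves the matching lower bound unestablished.
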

For convenience, we will henceforth write $d_u(\text{RM}(r,m))$
simply as $d_u$.

\smallskip

Assume that $m \ge 2r$. We want to show that (\ref{Us_eq1}) holds. We 
will only prove here the result for $s=\alpha$, as the result for 
$s=\beta$ can be proved analogously. Let $\hu$ be the integer given by 
\begin{equation}
\hu = \sum_{i=1}^{r} k(r-i,m+1-2i).
\label{eq:hu}
\end{equation}
Note that the above is the $(r,m)$-canonical representation of $\hu$.
By Theorem~\ref{thm:wei}, we have $d_{\hu} = \sum_{i=1}^r 2^{m+1-2i}$. 
In binary form, $\b(d_{\hu}) = (0,0,\ldots,0,0,1,0,1,\ldots,0,1)$, 
the number of 1s in $\b(d_{\hu})$ being $r$. Comparing this with 
the binary form of $\alpha$ given in (\ref{eq:binary_alpha}), 
it is clear that $d_{\hu} \le \alpha$.

Next, write $\hu + 1$ as 
$$
\hu + 1 = \sum_{i=1}^{r} k(r-i,m+1-2i) + k(0,m-2r),
$$
using the fact that $k(0,m-2r) = \binom{m-2r}{0} = 1$. This is again
in $(r,m)$-canonical form, and hence by Theorem~\ref{thm:wei},
we have $d_{\hu + 1} = \sum_{i=1}^r 2^{m+1-2i} + 2^{m-2r}$. 
In binary form, this is $\b(d_{\hu + 1}) = (0,0,\ldots,0,1,$
$1,0,1,\ldots,0,1)$, the number of 1s here being $r+1$. 
Comparing with (\ref{eq:binary_alpha}), we see that $\alpha < d_{\hu+1}$.

Since $d_{\hu} \le \alpha < d_{\hu+1}$, we have by (\ref{eq:U_s}),
$U_\alpha = \hu$. Observe that $\hu$ as given by (\ref{eq:hu})
is precisely equal to the claimed value of $U_\alpha$ in (\ref{Us_eq1}). 

Now, assume $m < 2r$. We wish to show (\ref{Us_eq2}) and (\ref{Us_eq3}).
We sketch the proof for (\ref{Us_eq3}) here; the proof for 
(\ref{Us_eq2}) is similar. Set 
$$
\vu = 
\begin{cases}
\sum_{i=1}^{\frac{m-1}{2}} k(r-1-i,m-2i) & \text{ if $m$ is odd}, \\
\sum_{i=1}^{\frac{m}{2}} k(r-1-i,m-2i) & \text{ if $m$ is even}.
\end{cases}
$$
The above is the $(r,m)$-canonical representation of $\vu$,
and hence, 
$$
d_{\vu} = 
\begin{cases}
\sum_{i=1}^{\frac{m-1}{2}} 2^{m-2i} & \text{ if $m$ is odd}, \\
\sum_{i=1}^{\frac{m}{2}} 2^{m-2i} & \text{ if $m$ is even}.
\end{cases}
$$
Comparing $\b(d_{\vu})$ with $\b(\beta)$ given in (\ref{eq:binary_beta}),
it can be seen that $d_{\vu} \le \beta$.

The $(r,m)$-canonical representation of $\vu+1$ is given by
$$
\begin{cases}
\sum_{i=1}^{\frac{m-1}{2}} k(r-1-i,m-2i) + k(r-1-\frac{m-1}{2},0)
& \text{ if $m$ is odd}, \\
\sum_{i=1}^{\frac{m}{2}-1} k(r-1-i,m-2i) + k(r-\frac{m}{2},1) & 
\text{ if $m$ is even}. 
\end{cases}
$$
Again, $d_{\vu+1}$ can be obtained from Theorem~\ref{thm:wei}, and
the subsequent comparison of binary forms shows that $\beta < d_{\vu+1}$.
Hence, by (\ref{eq:U_s}), we have $U_\beta = \vu$, which proves 
(\ref{Us_eq3}).

\section*{Acknowledgement} 
A.\ Thangaraj thanks Rakesh Pokala for several helpful discussions.

\end{document}